\documentclass[conference]{IEEEtran}
\IEEEoverridecommandlockouts
\usepackage{cite}
\usepackage{amsmath,amssymb,amsfonts}
\usepackage{amsthm}
\newtheorem{theorem}{Theorem}
\newtheorem{lemma}{Lemma}
\newtheorem{assumption}{Assumption}
\newtheorem{definition}{Definition}
\newtheorem{proposition}{Proposition}
\newtheorem{remark}{Remark}

\usepackage{algorithmic}
\usepackage{graphicx}
\usepackage{textcomp}
\usepackage{xcolor}
\usepackage[caption=false,font=footnotesize]{subfig}
\def\BibTeX{{\rm B\kern-.05em{\sc i\kern-.025em b}\kern-.08em
    T\kern-.1667em\lower.7ex\hbox{E}\kern-.125emX}}

\begin{document}

\title{Data-Driven Synthesis of Robust Positively Invariant Sets from Noisy Data\\
{\footnotesize}
\thanks{Chi Wang and David Angeli are with the Control and Power Group, Department of Electrical and Electronic Engineering, Imperial College London, London SW7 2AZ, U.K. (\{chi.wang23,d.angeli\}@imperial.ac.uk).
}
}

\author{Chi Wang, David Angeli
}
\maketitle

\begin{abstract}
This paper develops a method to construct robust positively invariant (RPI) tube sets from finite noisy input-state data of an unknown linear time-invariant (LTI) system, yielding tubes that can be directly embedded in tube-based robust data-driven predictive control. Data-consistency uncertainty sets are constructed under process/measurement noise with polytopic/ellipsoidal bounds. In the measurement-noise case, we provide a deterministic and data-consistent procedure to certify the induced residual bound from data. Based on these sets, a robustly stabilizing state-feedback gain is certified via a common quadratic contraction, which in turn enables constructive polyhedral/ellipsoidal RPI tube computation. Numerical examples quantify the conservatism induced by noisy data and the employed certification step.
\end{abstract}

\section{Introduction}
Robust positively invariant sets are a fundamental tool in control and optimization, with
particular importance in predictive control and constraint handling~\cite{Blanchini1999}. In model-based design,
the computation and approximation of minimal RPI sets and maximal RPI (or invariant) sets have been
studied extensively~\cite{Rakovic2005MRPI,Trodden2016,KolmanovskyGilbert1998,GilbertTan2002}, and they play a central role in tube-based robust model predictive control
(RMPC)~\cite{Langson2004}. In particular, tube-based RMPC relies on an RPI set for the error dynamics to construct
a robust tube around the nominal trajectory and to synthesize an RPI terminal set, thereby enabling
robust constraint satisfaction and recursive feasibility under bounded disturbances~\cite{Mayne2005,RawlingsMayneDiehl2017}.

Motivated by the increasing availability of data and the cost of reliable modeling, data-driven
control has recently emerged as a powerful alternative to classical model-based approaches. The key
idea is to synthesize controllers directly from a finite set of input-state trajectories of an
unknown LTI system. Starting from Willems' fundamental lemma~\cite{Willems2005}, behavioral descriptions have been leveraged to develop a range of data-driven analysis and control tools, with representative results on stabilization~\cite{DePersisTesi2020TAC} and state-feedback synthesis under noisy data~\cite{Berberich2020ACC,bisoffi2021trade}. These results indicate that invariant-set-based robust designs can, in principle, be performed directly from data.

In parallel, for data-enabled predictive control (DeePC)~\cite{coulson2019data} and related data-driven MPC
formulations~\cite{Huang2019CDC}, which primarily establish open-loop performance, deriving rigorous closed-loop
guarantees in the presence of noise remains challenging. To bridge this gap, the robust data-driven MPC
frameworks proposed in~\cite{Berberich2021TAC,Berberich2020} provide some of the first theoretical guarantees;
however, the resulting bounds can be conservative. Other formulations can reduce conservatism at the expense of more demanding computations, e.g.,
infinite-horizon SDP-based designs or repeated large LMI problems~\cite{xie2026,Hu2025TAC}. In
contrast, the tube-based paradigm---with its reliance on quadratic programs, tunable prediction
horizons, and a transparent link between disturbance bounds and conservatism---offers an attractive alternative
route for robust data-driven predictive control; however, it has not yet been fully exploited in a data-driven manner.

This paper takes a step in this direction by developing explicit and computable procedures to
construct RPI tube sets directly from noisy input-state trajectories. We consider both process-noise
and measurement-noise scenarios, with both polytopic and ellipsoidal noise
descriptions. A key difficulty in the measurement-noise case is that the induced regression residual
depends on the unknown system matrix, which necessitates careful data-consistency modeling. Building on these ingredients, we provide a unified framework that yields (i) data-consistency sets
under four noise scenarios and (ii) corresponding RPI tube sets that are directly usable within
tube-based data-driven predictive control.

Related work includes data-driven invariance and controller synthesis in nominal settings~\cite{BisoffiDePersisTesi2020IFAC}.
More recently,~\cite{BisoffiDePersisTesi2022} provides a sharp characterization of robust invariance directly from noisy
input-state data: it derives necessary and sufficient conditions for the existence of a linear state-feedback gain that
renders a prescribed polyhedral set robustly invariant, and shows that such a gain can be obtained by solving a linear program.
This perspective primarily targets feasibility certification for a given set, rather than an end-to-end construction of RPI tubes.
Complementarily,~\cite{MulagaletiBemporadZanon2021} considers joint data-driven synthesis of robust invariant sets and stabilizing controllers,
which can yield less conservative tubes than sequential designs. While powerful, these formulations typically require initialization and numerical tuning,
which are not directly data-checkable. In contrast, our approach provides a unified, offline-computable procedure for the four
noise settings considered in this paper, with tractable synthesis and certification steps for robust stabilizing feedback and for constructing certified polyhedral/ellipsoidal RPI tubes,
thereby enabling a transparent quantification of the conservatism induced by noisy data. The remainder of this paper is organized as follows. Section~\ref{sec:Preliminaries} introduces the
preliminaries and problem setup. Section~\ref{sec:Data-consistent} constructs the data-consistency
sets directly from noisy data and formulates the associated robust stabilizing controller synthesis
problems. Section~\ref{sec:RPI} presents unified data-driven procedures to construct RPI tube sets.
Section~\ref{sec:examples} illustrates, via numerical examples, the conservatism induced by noisy
data and how it decreases with more informative data. Section~\ref{sec:conclusions}
concludes the paper.
\section{Preliminaries}\label{sec:Preliminaries}
This section collects preliminaries. Section~\ref{sec:Notation} fixes notation, and Section~\ref{sec:problem} states the noisy LTI setting, the offline data assumptions, and the resulting tube-based robustness notions used throughout the paper.

\subsection{Notation}\label{sec:Notation}
Let $\mathbb{N}_{\ge 0}:=\{0,1,2,\ldots\}$ and let $\mathbb{R}^n$ denote the $n$-dimensional real space.
For a symmetric matrix $A$, $A\succeq 0$ ($A\succ 0$) means positive semidefinite (definite), and
$\mathbb{S}_{\succ 0}^n$ denotes the set of $n\times n$ symmetric positive definite matrices.
The identity matrix is $I$, and $(\cdot)^\top$ denotes transpose.
The Euclidean norm is $\|\cdot\|_2$, and $\|A\|_2$ denotes the induced $2$-norm.
Moreover, $\lambda_{\max}(\cdot)$ denotes the largest eigenvalue of a symmetric matrix,
$\operatorname{rank}(\cdot)$ the matrix rank, and $(\cdot)^\dagger$ the Moore--Penrose pseudoinverse.
For sets $\mathcal S,\mathcal T\subset\mathbb{R}^n$, the Minkowski sum is
$\mathcal S\oplus\mathcal T:=\{\,s+t\mid s\in\mathcal S,\ t\in\mathcal T\,\}$, and for $\alpha\ge0$,
$\alpha\mathcal S:=\{\,\alpha s\mid s\in\mathcal S\,\}$.
For $A\in\mathbb{R}^{m\times n}$ and $\mathcal S\subset\mathbb{R}^n$, the linear image is
$A\mathcal S:=\{\,As\mid s\in\mathcal S\,\}\subset\mathbb{R}^m$.
Furthermore, $\operatorname{co}(\cdot)$ denotes the convex hull, $\operatorname{vert}(\mathcal S)$ the vertex set of a polytope,
and $\operatorname{int}(\mathcal S)$ the interior.

\subsection{Problem setup}\label{sec:problem}
We consider an unknown discrete-time LTI system with state $x_k\in\mathbb{R}^n$ and input
$u_k\in\mathbb{R}^m$. The matrices $A^\ast\in\mathbb{R}^{n\times n}$ and
$B^\ast\in\mathbb{R}^{n\times m}$ denote the (unknown) true system dynamics.
We study robustness with respect to either bounded process noise or
bounded measurement noise, leading to the following two scenarios.

\paragraph{Process noise}
The state evolves according to
\begin{equation}\label{eq:process_noise_model}
    x_{k+1} = A^\ast x_k + B^\ast u_k + w_k,
\end{equation}
where $w_k\in\mathbb{R}^n$ is the unknown-but-bounded process noise.

\paragraph{Measurement noise}
The state evolves noiselessly, while measurements are corrupted by noise:
\begin{subequations}\label{eq:measurement_noise_model}
\begin{align}
    x_{k+1} &= A^\ast x_k + B^\ast u_k, \label{eq:lti_system}\\
    \hat{x}_k &= x_k + v_k, \label{eq:meas_eq}
\end{align}
\end{subequations}
where $\hat{x}_k\in\mathbb{R}^n$ denotes the measured state and $v_k\in\mathbb{R}^n$
is the unknown-but-bounded measurement noise.

In both scenarios, the noise is modeled via a known compact convex set (either polytopic
or ellipsoidal), which is standard in invariant-set-based robust control and synthesis~\cite{Langson2004}.

\begin{assumption}\label{as:bounded_noise}
For all $k\in\mathbb{N}_{\ge0}$, the noise satisfies
\[
w_k\in\mathcal W,\quad
v_k\in\mathcal V,
\]
where $\mathcal W,\mathcal V\subset\mathbb{R}^n$ are known, nonempty, compact, convex sets
containing the origin. In particular, we consider two commonly used parameterizations:
\emph{(i) Polytopic sets:}
\begin{subequations}\label{eq:noise_poly_sets}
\begin{align}
\mathcal W_{\mathrm P} &:= \{\, w \in \mathbb{R}^{n} \mid G_w w \le h_w \,\}, \label{eq:W_polytope}\\
\mathcal V_{\mathrm P} &:= \{\, v \in \mathbb{R}^{n} \mid G_v v \le h_v \,\}, \label{eq:V_polytope}
\end{align}
\end{subequations}
where $G_w\in\mathbb{R}^{q_w\times n},\, h_w\in\mathbb{R}^{q_w}$ and
$G_v\in\mathbb{R}^{q_v\times n},\, h_v\in\mathbb{R}^{q_v}$ are known.

\emph{(ii) Ellipsoidal sets:}
\begin{subequations}\label{eq:noise_ell_sets}
\begin{align}
\mathcal W_{\mathrm E} &:= \{\, w \in \mathbb{R}^{n} \mid w^\top Q_w^{-1} w \le 1 \,\}, \label{eq:W_ellipsoid}\\
\mathcal V_{\mathrm E} &:= \{\, v \in \mathbb{R}^{n} \mid v^\top Q_v^{-1} v \le 1 \,\}, \label{eq:V_ellipsoid}
\end{align}
\end{subequations}
where $Q_w,Q_v\in\mathbb{S}_{\succ0}^{n}$ are known.
\end{assumption}

Our goal is to construct a robust positively invariant (RPI) set directly from offline collected noisy data
to enable data-driven predictive control, with particular emphasis on the terminal tube set and the
initial tube set, ensuring recursive feasibility. To this end, we assume access to a single offline trajectory
collected under one of the following noise scenarios.

\begin{assumption}\label{as:available-trajectory}
A trajectory $\big(u^d_{[0{:}T{-}1]},\,x^d_{[0{:}T{-}1]}\big)$ of length $T$
generated by system~\eqref{eq:process_noise_model}, or a trajectory
$\big(u^d_{[0{:}T{-}1]},\,\hat{x}^d_{[0{:}T{-}1]}\big)$ of length $T$
generated by system~\eqref{eq:measurement_noise_model}, is available.
\end{assumption}

From the available trajectory, we form the standard data matrices
\begin{align}
X_0&=[x^d_0\ \cdots\ x^d_{T-2}],\qquad
X_1=[x^d_1\ \cdots\ x^d_{T-1}],\\
U_0&=[u^d_0\ \cdots\ u^d_{T-2}],
\end{align}
and, in the measurement-noise case, their measured counterparts
\[
\hat X_0=[\hat x^d_0\ \cdots\ \hat x^d_{T-2}],\qquad
\hat X_1=[\hat x^d_1\ \cdots\ \hat x^d_{T-1}].
\]
These data serve as the basic ingredients for constructing Hankel matrices and, in turn, data-driven
predictors. In the presence of noise, however, the Hankel matrices built from these noisy data generally fail to
span the exact trajectory space. The resulting prediction mismatch can be interpreted as an induced
multiplicative uncertainty in the data-driven predictor~\cite{Berberich2021TAC}. To facilitate tube-based design, we therefore adopt a
unified additive representation that captures all such effects.

Specifically, for any robustly stabilizing gain $K$ synthesized from the data-consistency set (see Section~\ref{sec:Data-consistent}), the tube-error dynamics admit the form
\begin{equation}\label{eq:state_map_TAC_refined}
e_{k+1}=(A^\ast+B^\ast K)\,e_k+d_k,\qquad d_k\in\mathcal D,
\end{equation}
where $d_k$ aggregates all uncertainty sources, including process/measurement noise and the data-induced prediction mismatch.
To characterize tube sets that are invariant under~\eqref{eq:state_map_TAC_refined}, we recall the standard notion of robust
positive invariance.

\begin{definition}\label{def:RPI}
A set $\mathcal E\subset\mathbb R^n$ is said to be robust positively invariant (RPI) for the dynamics~\eqref{eq:state_map_TAC_refined} with residual set $\mathcal D$ if
\[
e_k\in\mathcal E \ \Longrightarrow\ e_{k+1}\in\mathcal E,\qquad \forall\,d_k\in\mathcal D.
\]
Equivalently, $\mathcal E$ is RPI if and only if it satisfies the set inclusion
\begin{equation}\label{eq:RPI_condition_TAC_refined}
(A^\ast+B^\ast K)\,\mathcal E \ \oplus\ \mathcal D \ \subseteq\ \mathcal E.
\end{equation}
\end{definition}

\section{Data-consistency sets and robust stabilizing controller}\label{sec:Data-consistent}
In this section, we construct data-consistency sets from noisy data that outer-approximate the unknown dynamics
$(A^\ast,B^\ast)$. We consider process/measurement noise
modeled by polytopes $(\mathcal W_{\mathrm P},\mathcal V_{\mathrm P})$ and ellipsoids
$(\mathcal W_{\mathrm E},\mathcal V_{\mathrm E})$, yielding corresponding data-consistency sets
that, by construction, contain $(A^\ast,B^\ast)$. We then formulate the associated robust stabilizing controller synthesis problems.
\subsection{Data-consistent set}
We start with the process-noise case, where the one-step relation is directly affected by the
noise and therefore yields an immediate data-consistency description.

\emph{(i) Polytopic process noise.}
Based on the bound $\mathcal W_{\mathrm P}$ in~\eqref{eq:W_polytope}, define the one-step data-consistency set as
\begin{equation}\label{eq:Ci_poly_TAC}
\mathcal C_i^{\mathcal W_{\mathrm P}}
:=
\Bigl\{ (A,B)\Big|
\exists w_i\in \mathcal W_{\mathrm P}:\
x^d_{i+1}=A x^d_i+B u^d_i+w_i
\Bigr\}.
\end{equation}
The set of all matrices consistent with the data is then
\begin{equation}\label{eq:I_set_TAC_final_WP}
\mathcal I^{\mathcal W_{\mathrm P}}
:=
\bigcap_{i=0}^{T-2}\mathcal C_i^{\mathcal W_{\mathrm P}}.
\end{equation}
\emph{(ii) Ellipsoidal process noise.}
Similarly, for the ellipsoidal bound $\mathcal W_{\mathrm E}$ in~\eqref{eq:W_ellipsoid}, define
\begin{equation}\label{eq:Ci_set_TAC_final_WE}
\mathcal C_i^{\mathcal W_{\mathrm E}}
:=
\Bigl\{(A,B) \Big|
\exists w_i\in \mathcal W_{\mathrm E}:\;
x^d_{i+1}=A x^d_i+B u^d_i+w_i
\Bigr\}.
\end{equation}
The corresponding data-consistency set is
\begin{equation}\label{eq:I_set_TAC_final_WE}
\mathcal I^{\mathcal W_{\mathrm E}}
:=
\bigcap_{i=0}^{T-2} \mathcal C_i^{\mathcal W_{\mathrm E}}.
\end{equation} 

We now consider the measurement-noise model~\eqref{eq:measurement_noise_model}, where only
$\hat x_k=x_k+v_k$ is available and the measurement noise satisfies $v_k\in\mathcal V$.
In particular, the measured state obeys the regression
\begin{equation}\label{eq:regression_meas_unified}
\hat x_{k+1}=A^\ast \hat x_k + B^\ast u_k + \eta_k,
\qquad
\eta_k:= v_{k+1}-A^\ast v_k.
\end{equation}
Since $\eta_k$ depends on $A^\ast$, it is generally not constrained to lie in the original noise set.
To obtain a tractable data-consistency description, we bound $\eta_k$ via the gauge induced by $\mathcal V$.

Let $\mathcal V\subset\mathbb R^n$ be compact and convex with $0\in\operatorname{int}(\mathcal V)$ and define its
Minkowski functional (gauge)
\[
\|z\|_{\mathcal V}
:=
\inf\{\,\alpha\ge0 \mid z\in\alpha\mathcal V\,\},
\qquad
\|A\|_{\mathcal V}
:=
\sup_{z\neq0}\frac{\|Az\|_{\mathcal V}}{\|z\|_{\mathcal V}}.
\]

\begin{lemma}\label{lem:gauge_inflation_meas}
For any $v_k,v_{k+1}\in\mathcal V$ and any matrix $A$, the residual $\eta=v_{k+1}-Av_k$ satisfies
\[
\|\eta\|_{\mathcal V}\le 1+\|A\|_{\mathcal V}.
\]
Consequently, for any $\gamma\ge \|A\|_{\mathcal V}$,
\begin{equation}\label{eq:D_gamma_general}
\eta \in \mathcal D(\gamma)
:=
(1+\gamma)\mathcal V.
\end{equation}
\end{lemma}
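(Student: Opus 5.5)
The plan is to use the basic properties of the Minkowski functional of a compact convex set with the origin in its interior. Under these hypotheses the gauge $\|\cdot\|_{\mathcal V}$ is finite, nonnegative, positively homogeneous and subadditive. It is also not necessarily symmetric, since $\mathcal V$ need not be centrally symmetric. We also have $\mathcal V=\{z\mid \|z\|_{\mathcal V}\le 1\}$, because $\mathcal V$ is closed and convex and contains $0$. Subadditivity comes from convexity: if $z_1\in\alpha_1\mathcal V$ and $z_2\in\alpha_2\mathcal V$, then
\[
z_1+z_2\in(\alpha_1+\alpha_2)\mathcal V .
\]
Taking infima gives $\|z_1+z_2\|_{\mathcal V}\le\|z_1\|_{\mathcal V}+\|z_2\|_{\mathcal V}$. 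The induced quantity $\|A\|_{\mathcal V}$ is finite, since all gauges are equivalent to the Euclidean norm, up to constants, on $\mathbb R^n$. It also satisfies $\|Az\|_{\mathcal V}\le\|A\|_{\mathcal V}\|z\|_{\mathcal V}$ for all $z$; the case $z=0$ is trivial.

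With these facts, the main estimate reads
\[
\|\eta\|_{\mathcal V}=\|v_{k+1}+(-Av_k)\|_{\mathcal V}\le \|v_{k+1}\|_{\mathcal V}+\|{-A}v_k\|_{\mathcal V}.
\]
The first term is at most $1$ because $v_{k+1}\in\mathcal V$.

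The second term is where asymmetry matters, and I expect it to be the only real obstacle. Because the gauge may be asymmetric, $\|{-A}v_k\|_{\mathcal V}$ is bounded by $\|{-A}\|_{\mathcal V}\,\|v_k\|_{\mathcal V}$, not by $\|A\|_{\mathcal V}\|v_k\|_{\mathcal V}$. I would handle this in one of two ways:

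\begin{itemize}
\item \emph{Symmetric case.} If $\mathcal V$ is symmetric, as holds for ellipsoids $\mathcal V_{\mathrm E}$ and for symmetric polytopes, then $\|{-z}\|_{\mathcal V}=\|z\|_{\mathcal V}$. The term is then bounded by $\|A\|_{\mathcal V}\|v_k\|_{\mathcal V}\le\|A\|_{\mathcal V}$.
\item \emph{General case.} Otherwise I would read $\|A\|_{\mathcal V}$ as shorthand for $\max\{\|A\|_{\mathcal V},\|{-A}\|_{\mathcal V}\}$, or state the bound with $\|{-A}\|_{\mathcal V}$. The proof is identical, and I would make this caveat explicit.
\end{itemize}

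Either way, $\|\eta\|_{\mathcal V}\le 1+\|A\|_{\mathcal V}$.

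For the consequence, take $\gamma\ge\|A\|_{\mathcal V}$. Then $\|\eta\|_{\mathcal V}\le 1+\gamma$, and it remains to show $\eta\in(1+\gamma)\mathcal V$. By positive homogeneity,
\[
\Big\|\tfrac{\eta}{1+\gamma}\Big\|_{\mathcal V}\le 1 ,
\]
and so $\eta/(1+\gamma)\in\mathcal V$ by closedness of $\mathcal V$. Equivalently, $\eta\in(1+\gamma)\mathcal V=\mathcal D(\gamma)$, which is \eqref{eq:D_gamma_general}.
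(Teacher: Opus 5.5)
Your proof follows the paper's route: subadditivity and positive homogeneity of the gauge, the induced-gain bound on the $Av_k$ term, and the equivalence $\|\eta\|_{\mathcal V}\le 1+\gamma\iff\eta\in(1+\gamma)\mathcal V$. Your symmetry caveat is warranted, because the paper's proof silently replaces $\|{-A}v_k\|_{\mathcal V}$ by $\|Av_k\|_{\mathcal V}$, and the statement really does fail for non-symmetric $\mathcal V$ (e.g.\ $n=1$, $\mathcal V=[-2,1]$, $A=1$, $v_{k+1}=1$, $v_k=-2$ give $\|\eta\|_{\mathcal V}=3>2=1+\|A\|_{\mathcal V}$), so you need either central symmetry or your version stated with $\|{-A}\|_{\mathcal V}$.
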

\begin{proof}
By subadditivity and positive homogeneity of the gauge,
\[
\begin{aligned}
\|\eta\|_{\mathcal V}
&=\|v_{k+1}-A v_k\|_{\mathcal V}
\le \|v_{k+1}\|_{\mathcal V}+\|A v_k\|_{\mathcal V}\\
&\le 1+\|A\|_{\mathcal V}\|v_k\|_{\mathcal V}
\le 1+\|A\|_{\mathcal V}.
\end{aligned}
\]
The inclusion~\eqref{eq:D_gamma_general} follows since
$\|\eta\|_{\mathcal V}\le 1+\gamma \iff \eta\in(1+\gamma)\mathcal V$.
\end{proof}

Specializing~\eqref{eq:regression_meas_unified} to the offline dataset
$\big(u^d_{[0{:}T{-}1]},\hat x^d_{[0{:}T{-}1]}\big)$ from Assumption~\ref{as:available-trajectory},
Lemma~\ref{lem:gauge_inflation_meas} implies that, for any $\gamma\ge \|A\|_{\mathcal V}$, each one-step sample satisfies
\begin{equation}\label{eq:one_step_meas_unified}
\hat x^d_{i+1}
=
A\hat x^d_i + B u^d_i + \eta_i,
\qquad
\eta_i\in\mathcal D(\gamma),
\qquad i=0,\dots,T-2.
\end{equation}
This leads to the one-step consistency set
\begin{equation}\label{eq:Ci_meas_template}
\mathcal C_i^{\mathcal V}(\gamma)
:=
\Bigl\{(A,B) \Big| \exists \eta_i\in\mathcal D(\gamma) :\
\hat x^d_{i+1}=A\hat x^d_i+B u^d_i+\eta_i
\Bigr\},
\end{equation}
and the corresponding data-consistency set
\begin{equation}\label{eq:I_meas_template}
\mathcal I^{\mathcal V}(\gamma)
:=
\bigcap_{i=0}^{T-2}\mathcal C_i^{\mathcal V}(\gamma).
\end{equation}

To obtain nonempty and computable data-consistency sets, we fix the inflation parameter $\gamma$
to data-consistent constants that upper-bound the induced operator gains of the true matrix,
\begin{equation}\label{eq:gamma_star_req}
\gamma_{\mathrm P}^\ast \ \ge\ \|A^\ast\|_{\mathcal V_{\mathrm P}},
\qquad
\gamma_{\mathrm E}^\ast \ \ge\ \|A^\ast\|_{\mathcal V_{\mathrm E}}.
\end{equation}
Appendix~\ref{app:gamma-tight} presents a deterministic data-driven certification procedure to compute
$\gamma_{\mathrm P}^\ast$, and a tractable but conservative construction of $\gamma_{\mathrm E}^\ast$
via a polyhedral surrogate.

Fixing $\gamma\in\{\gamma_{\mathrm P}^\ast,\gamma_{\mathrm E}^\ast\}$ removes the homogeneity in the description of
$\mathcal C_i^{\mathcal V}(\gamma)$ and $\mathcal I^{\mathcal V}(\gamma)$ and yields tractable sets
$\mathcal I^{\mathcal V_{\mathrm P},\ast}$ and $\mathcal I^{\mathcal V_{\mathrm E},\ast}$.

\emph{(i) Polytopic measurement noise.}
For $\mathcal V=\mathcal V_{\mathrm P}$ in~\eqref{eq:V_polytope}, the gauge is well-defined and
$\alpha\mathcal V_{\mathrm P}=\{z\mid G_v z\le \alpha h_v\}$. Hence
\[
\mathcal D_{\mathrm P}(\gamma)
:=
(1+\gamma)\mathcal V_{\mathrm P}
=
\{\,\eta\mid G_v\eta\le (1+\gamma)h_v\,\}.
\]
Applying~\eqref{eq:Ci_meas_template} with $\mathcal D=\mathcal D_{\mathrm P}(\gamma)$ yields
\begin{equation}\label{eq:Ci_VP_gamma}
\mathcal C_i^{\mathcal V_{\mathrm P}}(\gamma)
:=
\Bigl\{(A,B) \Big|
\exists \eta_i\in\mathcal D_{\mathrm P}(\gamma) :\
\hat x^d_{i+1}=A\hat x^d_i+B u^d_i+\eta_i
\Bigr\},
\end{equation}
and $\mathcal I^{\mathcal V_{\mathrm P}}(\gamma):=\bigcap_{i=0}^{T-2}\mathcal C_i^{\mathcal V_{\mathrm P}}(\gamma)$.
Fixing $\gamma=\gamma_{\mathrm P}^\ast$ in~\eqref{eq:Ci_VP_gamma} defines the one-step sets
$\mathcal C_i^{\mathcal V_{\mathrm P},\ast}$ and the polytope
$\mathcal I^{\mathcal V_{\mathrm P},\ast}$ as
\begin{equation}\label{eq:I_VP_star}
\mathcal C_i^{\mathcal V_{\mathrm P},\ast}
:=
\mathcal C_i^{\mathcal V_{\mathrm P}}(\gamma_{\mathrm P}^\ast),
\qquad
\mathcal I^{\mathcal V_{\mathrm P},\ast}
:=
\bigcap_{i=0}^{T-2}\mathcal C_i^{\mathcal V_{\mathrm P},\ast}.
\end{equation}

\emph{(ii) Ellipsoidal measurement noise.}
For $\mathcal V=\mathcal V_{\mathrm E}$ in~\eqref{eq:V_ellipsoid}, the gauge is
$\|z\|_{\mathcal V_{\mathrm E}}=\sqrt{z^\top Q_v^{-1}z}$, and by Lemma~\ref{lem:gauge_inflation_meas}, the induced residual satisfies
$\eta_k\in(1+\gamma)\mathcal V_{\mathrm E}$. Hence
\begin{equation}\label{eq:D_gamma_VE}
\mathcal D_{\mathrm E}(\gamma)
:=
(1+\gamma)\mathcal V_{\mathrm E}
=
\bigl\{\,\eta\in\mathbb R^n \ \big|\ \eta^\top Q_v^{-1}\eta\le (1+\gamma)^2\,\bigr\}.
\end{equation}
Applying~\eqref{eq:Ci_meas_template} with $\mathcal D=\mathcal D_{\mathrm E}(\gamma)$ yields
\begin{equation}\label{eq:Ci_VE_gamma}
\mathcal C_i^{\mathcal V_{\mathrm E}}(\gamma)
:=
\Bigl\{(A,B) \Big|
\exists \eta_i\in\mathcal D_{\mathrm E}(\gamma) :\
\hat x^d_{i+1}=A\hat x^d_i+B u^d_i+\eta_i
\Bigr\},
\end{equation}
and $\mathcal I^{\mathcal V_{\mathrm E}}(\gamma):=\bigcap_{i=0}^{T-2}\mathcal C_i^{\mathcal V_{\mathrm E}}(\gamma)$.
Fixing $\gamma=\gamma_{\mathrm E}^\ast$ in~\eqref{eq:Ci_VE_gamma} defines the one-step sets
$\mathcal C_i^{\mathcal V_{\mathrm E},\ast}$ and the spectrahedron $\mathcal I^{\mathcal V_{\mathrm E},\ast}$ as
\begin{equation}\label{eq:I_VE_star}
\mathcal C_i^{\mathcal V_{\mathrm E},\ast}
:=
\mathcal C_i^{\mathcal V_{\mathrm E}}(\gamma_{\mathrm E}^\ast),
\qquad
\mathcal I^{\mathcal V_{\mathrm E},\ast}
:=
\bigcap_{i=0}^{T-2}\mathcal C_i^{\mathcal V_{\mathrm E},\ast}.
\end{equation}

To ensure boundedness of the resulting consistency sets, we impose the following standard rank condition on collected data.
\begin{assumption}\label{as:fullrowrank}
The stacked data matrix
\[
Z:=\begin{bmatrix}\hat X_0\\ U_0\end{bmatrix}
\in \mathbb{R}^{(n+m)\times (T-1)},
\]
has full row rank, i.e., $\operatorname{rank}(Z)=n+m$.
\end{assumption}
In the noise-free case, this property is guaranteed when the input
$u^d_{[0{:}T{-}1]}$ is persistently exciting of order $n+1$ and the pair $(A^\ast,B^\ast)$
is controllable~\cite{Willems2005}. With bounded measurement noise, we assume that
the collected trajectory is nondegenerate so that the rank condition is preserved.
\begin{lemma}\label{lem:I_meas_compact}
Suppose Assumption~\ref{as:bounded_noise},~\ref{as:available-trajectory}
and~\ref{as:fullrowrank} hold. Then $\mathcal I^{\mathcal V_{\mathrm P},\ast}$ in~\eqref{eq:I_VP_star}
is a nonempty, closed, convex, and compact polyhedron, and hence a polytope. Moreover,
$\mathcal I^{\mathcal V_{\mathrm E},\ast}$ in~\eqref{eq:I_VE_star} is a nonempty, closed, convex, and compact
spectrahedron.
\end{lemma}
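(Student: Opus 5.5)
Writing $\Theta:=[A\ B]\in\mathbb R^{n\times(n+m)}$ and $z_i:=\begin{bmatrix}\hat x^d_i\\ u^d_i\end{bmatrix}$ (the $i$-th column of $Z$), we eliminate the existential variable $\eta_i$ in each one-step set: $\eta_i=\hat x^d_{i+1}-\Theta z_i$ is uniquely determined by $\Theta$. This gives explicit descriptions:
\[
\mathcal C_i^{\mathcal V_{\mathrm P},\ast}=\{\Theta\mid G_v(\hat x^d_{i+1}-\Theta z_i)\le(1+\gamma_{\mathrm P}^\ast)h_v\},
\]
\[
\mathcal C_i^{\mathcal V_{\mathrm E},\ast}=\{\Theta\mid (\hat x^d_{i+1}-\Theta z_i)^\top Q_v^{-1}(\hat x^d_{i+1}-\Theta z_i)\le(1+\gamma_{\mathrm E}^\ast)^2\}.
\]
In the polytopic case each $\mathcal C_i^{\mathcal V_{\mathrm P},\ast}$ is a finite intersection of closed half-spaces in $\Theta$, since the constraint is affine in $\Theta$. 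In the ellipsoidal case, a Schur complement rewrites each constraint as the LMI
\[
\begin{bmatrix}(1+\gamma_{\mathrm E}^\ast)^2 & (\hat x^d_{i+1}-\Theta z_i)^\top\\ \hat x^d_{i+1}-\Theta z_i & Q_v\end{bmatrix}\succeq0,
\]
which is affine in $\Theta$. Hence $\mathcal C_i^{\mathcal V_{\mathrm E},\ast}$ is a closed convex spectrahedron. Finite intersections preserve closedness, convexity and the polyhedral/spectrahedral structure.

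\textbf{Nonemptiness.} We show that $(A^\ast,B^\ast)$ belongs to both sets. By \eqref{eq:regression_meas_unified}, the true residual is $\eta_i=v_{i+1}-A^\ast v_i$ with $v_i,v_{i+1}\in\mathcal V$ (Assumption~\ref{as:bounded_noise}). By Lemma~\ref{lem:gauge_inflation_meas} with $\gamma=\gamma^\ast_{\mathrm P}$ or $\gamma^\ast_{\mathrm E}$, which is admissible by \eqref{eq:gamma_star_req}, we get $\eta_i\in\mathcal D(\gamma^\ast)$ for every $i$. Note that Lemma~\ref{lem:gauge_inflation_meas} requires $0\in\operatorname{int}(\mathcal V)$. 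This holds automatically for $\mathcal V_{\mathrm E}$ since $Q_v\succ0$; for $\mathcal V_{\mathrm P}$ we will assume it, i.e.\ $h_v>0$ componentwise, as implicitly used when the gauge is declared well-defined.

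\textbf{Boundedness.} This is the step where Assumption~\ref{as:fullrowrank} enters, and it is the main point. Take any $\Theta_1,\Theta_2$ in the set and put $\Delta:=\Theta_1-\Theta_2$. For each $i$, we have $\Delta z_i=\eta_i^{(2)}-\eta_i^{(1)}\in\mathcal D-\mathcal D$, which is a bounded set (with $\mathcal D$ compact). Hence $\|\Delta Z\|$ is bounded by a constant $c$ that depends only on the data and the noise bound. Since $Z$ has full row rank, $ZZ^\top\succ0$ and $Z$ has a right inverse $Z^\dagger=Z^\top(ZZ^\top)^{-1}$. Therefore
\[
\Delta=\Delta ZZ^\dagger \quad\text{and}\quad \|\Delta\|\le c\,\|Z^\dagger\|.
\]
Fixing $\Theta_2=[A^\ast\ B^\ast]$ shows the set is bounded. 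For $\mathcal V_{\mathrm P}$ compactness is assumed, so $\mathcal D_{\mathrm P}$ is bounded; for the ellipsoid, boundedness of $\mathcal D_{\mathrm E}$ follows from $Q_v\succ0$.

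\textbf{Conclusion.} Combining closedness and boundedness gives compactness. A compact polyhedron is a polytope (Minkowski--Weyl), which completes the polytopic claim; the ellipsoidal claim follows in the same way.
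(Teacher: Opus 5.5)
Your proof is correct and follows essentially the paper's argument. Closedness and convexity come from the affine (polytopic) or Schur-complement LMI (ellipsoidal) description, nonemptiness comes from applying Lemma~\ref{lem:gauge_inflation_meas} with \eqref{eq:gamma_star_req} to $(A^\ast,B^\ast)$, and boundedness comes from the right inverse $Z^\dagger$ together with a uniform bound on the residuals. The only cosmetic difference is that you bound $\Theta_1-\Theta_2$ via $\mathcal D-\mathcal D$ and anchor at $[A^\ast\ B^\ast]$, whereas the paper writes $[A\ B]=(\hat X_1-E_0)Z^\dagger$ directly and so needs no anchor point.
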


\begin{proof}
Closedness and convexity follow directly from the form of the constraints in~\eqref{eq:Ci_VP_gamma} and~\eqref{eq:Ci_VE_gamma}. Specifically,~\eqref{eq:Ci_VP_gamma} imposes finitely many affine
inequalities in $(A,B)$ (a polyhedral description), while~\eqref{eq:Ci_VE_gamma} is a linear matrix inequality. Nonemptiness is ensured since the data are generated by $(A^\ast,B^\ast)$, and~\eqref{eq:gamma_star_req} makes the induced residuals admissible.

For boundedness, take any $(A,B)$ feasible for either set. Stacking~\eqref{eq:one_step_meas_unified} for
$i=0,\dots,T-2$ gives
\begin{equation}\label{eq:stack_meas}
\hat X_1
=
A\hat X_0 + B U_0 + E_0,\qquad
E_0
:=
[\eta_0\ \cdots\ \eta_{T-2}].
\end{equation}
In the polytopic case, $\eta_i\in(1+\gamma_{\mathrm P}^\ast)\mathcal V_{\mathrm P}$ implies a uniform Euclidean bound
$\|\eta_i\|_2\le \bar\eta_{\mathrm P}$ and thus $\|E_0\|_2\le \|E_0\|_F\le \sqrt{T-1}\,\bar\eta_{\mathrm P}$.
In the ellipsoidal case, $\eta_i^\top Q_v^{-1}\eta_i\le (1+\gamma_{\mathrm E}^\ast)^2$ implies
$\|\eta_i\|_2\le (1+\gamma_{\mathrm E}^\ast)\sqrt{\lambda_{\max}(Q_v)}$, so $\|E_0\|_2$ is uniformly bounded as well.

Since $ZZ^\dagger=I_{n+m}$ by Assumption~\ref{as:fullrowrank}, right-multiplying~\eqref{eq:stack_meas} by $Z^\dagger$ yields
\[
\begin{bmatrix}A & B\end{bmatrix}
=
(\hat X_1-E_0)Z^\dagger,
\]
and therefore
\[
\Bigl\|\begin{bmatrix}A & B\end{bmatrix}\Bigr\|
\le
(\|\hat X_1\|+\|E_0\|)\,\|Z^\dagger\|,
\]
which is uniformly bounded. Hence both sets are bounded and, being closed, compact.
Finally, in the polytopic case the set is a compact polyhedron (thus a polytope), whereas in the ellipsoidal
case it is a compact LMI-feasible set (thus a spectrahedron).
\end{proof}
\subsection{Robust stabilizing controller}
We next design a robust stabilizing tube feedback gain $K$ for the data-consistency sets
$\mathcal I\in\{\mathcal I^{\mathcal W_{\mathrm P}},\,\mathcal I^{\mathcal V_{\mathrm P},\ast},\,
\mathcal I^{\mathcal W_{\mathrm E}},\,\mathcal I^{\mathcal V_{\mathrm E},\ast}\}$.
Specifically, we seek a common quadratic Lyapunov function with an explicit contraction margin:
\begin{equation}\label{eq:stability_lmi}
\begin{aligned}
\text{find}\quad & P \succ 0,\; K,\; \beta>0\\
\text{s.t.}\quad &
(A{+}BK)\,P\,(A{+}BK)^\top - P \prec -\beta I,
\quad \forall (A,B)\in\mathcal I.
\end{aligned}
\end{equation}
The margin $-\beta I$ enforces a uniform contraction rate, which is a key ingredient for the existence and computability
of an RPI tube set induced by the data-consistent uncertainty.

To obtain tractable conditions, we apply the standard change of variables $Y:=KP$.
By the Schur complement,~\eqref{eq:stability_lmi} is implied by
\begin{equation}\label{eq:decay_schur}
\begin{bmatrix}
P-\beta I & AP+BY\\
(AP+BY)^\top & P
\end{bmatrix}
\succeq 0,\qquad \forall (A,B)\in\mathcal I.
\end{equation}
We next derive computable conditions for~\eqref{eq:decay_schur}. The resulting formulations depend on the
geometry of $\mathcal I$: for ellipsoidal sets we use an S-procedure relaxation, whereas
for polytopic sets we obtain an exact vertex-based SDP.

\emph{(i) Ellipsoidal data-consistency sets.}
For $\mathcal I\in\{\mathcal I^{\mathcal W_{\mathrm E}},\mathcal I^{\mathcal V_{\mathrm E},\ast}\}$,
a computationally tractable sufficient condition for~\eqref{eq:decay_schur} can be obtained by the
lossy matrix S-procedure; see~\cite[Prop.~1]{bisoffi2021trade}.

\begin{proposition}\label{prop:K-stab-ellip}
Suppose there exist $P=P^\top\succ0$, $Y\in\mathbb{R}^{m\times n}$, $\beta>0$, and multipliers
$\tau_0,\ldots,\tau_{T-2}\ge0$ such that
\begin{equation}\label{eq:ellip_stab_sproc}
\begin{aligned}
&\begin{bmatrix}
P-\beta I & 0 & 0 & 0\\
0 & -P & -Y^\top & 0\\
0 & -Y & 0 & Y\\
0 & 0 & Y^\top & P
\end{bmatrix}
\;-\;
\sum_{i=0}^{T-2}\tau_i\,\Xi_i(Q)
\ \succeq\ 0 ,
\end{aligned}
\end{equation}
where
\begin{equation}\label{eq:Xi_i_def}
\Xi_i(Q)
:=
\tilde M_i
\begin{bmatrix}
Q & 0\\
0 & -1
\end{bmatrix}
\tilde M_i^\top,
\qquad
\tilde M_i
:=
\begin{bmatrix}
I & z^d_{i+1}\\
0 & -z^d_i\\
0 & -u^d_i\\
0 & 0
\end{bmatrix}.
\end{equation}
Then, with $K:=YP^{-1}$,
\[
(A{+}BK)\,P\,(A{+}BK)^\top - P + \beta I \preceq 0,
\quad \forall (A,B)\in\mathcal I.
\]
\end{proposition}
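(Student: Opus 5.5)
The plan is to test the matrix inequality \eqref{eq:ellip_stab_sproc} against the vector $\begin{bmatrix} I & A & B & (A+BK)^\top\cdots\end{bmatrix}$-type structure induced by $(A,B)$. Concretely, for a fixed $(A,B)\in\mathcal I$ we build a full-column-rank matrix $\Pi(A,B)$ with $n$ columns and congruence-transform \eqref{eq:ellip_stab_sproc} with it. The goal is that the first (constant) block, after the transformation, reproduces the block matrix in \eqref{eq:decay_schur}, or a Schur-complement-equivalent form of it. The multiplier terms $\Pi^\top\Xi_i(Q)\Pi$ must, in turn, be negative semidefinite precisely because $(A,B)\in\mathcal C_i$.

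The first step is to rewrite the one-step consistency constraint. For the ellipsoidal sets $\mathcal C_i^{\mathcal W_{\mathrm E}}$, or $\mathcal C_i^{\mathcal V_{\mathrm E},\ast}$ with $Q=(1+\gamma_{\mathrm E}^\ast)^2Q_v$, membership means $w_i:=x^d_{i+1}-Ax^d_i-Bu^d_i$ satisfies $w_i^\top Q^{-1}w_i\le1$. By a Schur complement this is equivalent to $w_iw_i^\top\preceq Q$, which can be written as the quadratic matrix inequality
\[
\begin{bmatrix} I\\ A^\top\\ B^\top\end{bmatrix}^{\!\top}
\begin{bmatrix} I & z^d_{i+1}\\ 0 & -z^d_i\\ 0 & -u^d_i\end{bmatrix}
\begin{bmatrix} Q & 0\\ 0 & -1\end{bmatrix}
\begin{bmatrix} I & z^d_{i+1}\\ 0 & -z^d_i\\ 0 & -u^d_i\end{bmatrix}^{\!\top}
\begin{bmatrix} I\\ A^\top\\ B^\top\end{bmatrix}\succeq0 .
\]
This is exactly $\Pi_0^\top\Xi_i(Q)\Pi_0\succeq0$ for $\Pi_0$ padded with a zero last block. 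This is the form used in \cite[Prop.~1]{bisoffi2021trade}.

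The second step handles the decay condition \eqref{eq:decay_schur}. Taking the Schur complement with respect to the $P$ block gives the requirement
\[
P-\beta I-(AP+BY)P^{-1}(AP+BY)^\top\succeq0 .
\]
We express this as the statement that the $4\times4$ block matrix in \eqref{eq:ellip_stab_sproc}, when sandwiched by $\Pi=\begin{bmatrix} I & A^\top & B^\top & \ast\end{bmatrix}^\top$, is nonnegative. Here the fourth block row of $\Pi$ is chosen so as to complete a square, which is where the ``$Y$, $Y^\top$, $P$'' corner enters. The cleanest route is to verify the identity
\[
\Pi^\top\mathcal M\,\Pi=P-\beta I-(AP+BY)P^{-1}(AP+BY)^\top
\]
for a suitable $\Pi$, up to the quadratic terms in $A,B$. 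Alternatively, we can follow \cite{bisoffi2021trade}: first eliminate the fourth block of $\mathcal M$ by a Schur complement, which yields a matrix in the variables $(I,A,B)$ that is quadratic and whose congruence by $\begin{bmatrix} I & A & B\end{bmatrix}^\top$ gives the needed inequality.

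The final step combines the two. Since $\mathcal M-\sum_i\tau_i\Xi_i\succeq0$, sandwiching gives
\[
\Pi^\top\mathcal M\Pi\succeq\sum_i\tau_i\,\Pi^\top\Xi_i\Pi\succeq0,
\]
because each $\tau_i\ge0$ and $(A,B)\in\mathcal I\subseteq\mathcal C_i$. Substituting $Y=KP$ then yields $(A+BK)P(A+BK)^\top-P+\beta I\preceq0$. The main obstacle is the bookkeeping of the sign conventions: the $-P$, $-Y$ entries and the placement of $\Xi_i$ must match the $-1$ block. I expect to resolve this by first applying the Schur complement on the $P$ corner, so that the remaining $3\times3$ block matrix becomes $\operatorname{diag}(P-\beta I,\,\cdot\,)$ minus a rank-structured term equal to $\begin{bmatrix}0\\P\\Y\end{bmatrix}P^{-1}\begin{bmatrix}0\\P\\Y\end{bmatrix}^\top$. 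Sandwiching with $\begin{bmatrix} I & A & B\end{bmatrix}$ then gives exactly the Schur-reduced form of \eqref{eq:decay_schur}.
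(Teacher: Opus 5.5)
Your argument is correct and is the lossy matrix S-procedure of \cite[Prop.~1]{bisoffi2021trade}, which is all the paper offers in place of a proof: you write each consistency constraint as $\Pi^\top\Xi_i(Q)\Pi = Q - w_iw_i^\top\succeq 0$, Schur-complement the $(4,4)$ block $P$ (which $\Xi_i$ does not touch), and left-multiply by $\begin{bmatrix} I & A & B\end{bmatrix}$ and right-multiply by its transpose. Two small fixes: in your first paragraph the multiplier terms must be positive (not negative) semidefinite, as your final step correctly uses; and the direct congruence holds exactly, not ``up to'' anything, if you take the fourth block of $\Pi$ to be $-K^\top B^\top$.
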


To match the two ellipsoidal data-consistency sets, we instantiate~\eqref{eq:Xi_i_def} as follows:
for $\mathcal I^{\mathcal W_{\mathrm E}}$, take $z_i^d:=x_i^d$ and $Q:=Q_w$;
for $\mathcal I^{\mathcal V_{\mathrm E},\ast}$, take $z_i^d:=\hat x_i^d$ and
$Q:=(1+\gamma_{\mathrm E}^\ast)^2 Q_v$.

\emph{(ii) Polytopic data-consistency sets.}
For $\mathcal I\in\{\mathcal I^{\mathcal W_{\mathrm P}},\,\mathcal I^{\mathcal V_{\mathrm P},\ast}\}$,
Condition~\eqref{eq:decay_schur} admits a lossless vertex characterization. Since $\mathcal I$ is a compact polytope, its
vertex set $\operatorname{vert}(\mathcal I)=\{(A^{(j)},B^{(j)})\}_{j=1}^{N_v}$ is finite, and it suffices to enforce~\eqref{eq:decay_schur}
at these vertices, as formalized next.

\begin{proposition}\label{prop:K-stab-poly}
Let $\mathcal I$ be a compact polytope with vertex set
$\operatorname{vert}(\mathcal I)=\{(A^{(j)},B^{(j)})\}_{j=1}^{N_v}$.
If there exist $P=P^\top\succ0$, $Y\in\mathbb{R}^{m\times n}$, and $\beta>0$ such that
\begin{equation}\label{eq:poly_decay_SDP}
\begin{aligned}
&\begin{bmatrix}
P-\beta I & S^{(j)}\\
\bigl(S^{(j)}\bigr)^\top & P
\end{bmatrix}\succeq 0,\qquad \forall\, j=1,\dots,N_v,\\
& S^{(j)} := A^{(j)}P + B^{(j)}Y ,
\end{aligned}
\end{equation}
then, with $K:=YP^{-1}$,
\begin{equation}\label{eq:robust_Lyap_poly_result}
(A{+}BK)\,P\,(A{+}BK)^\top - P + \beta I \preceq 0,
\quad \forall (A,B)\in\mathcal I.
\end{equation}
Moreover,~\eqref{eq:poly_decay_SDP} is lossless in the sense that~\eqref{eq:decay_schur} holds for all
$(A,B)\in\mathcal I$ if and only if it holds for all vertices $(A^{(j)},B^{(j)})$.
\end{proposition}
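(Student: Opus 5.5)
The argument rests on one observation. For fixed decision variables $(P,Y,\beta)$, the matrix in~\eqref{eq:decay_schur},
\[
M(A,B):=\begin{bmatrix}
P-\beta I & AP+BY\\
(AP+BY)^\top & P
\end{bmatrix},
\]
is affine in $(A,B)$. The set of $(A,B)$ with $M(A,B)\succeq 0$ is therefore convex, so positive semidefiniteness at the vertices propagates to the whole polytope.

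\emph{Step 1 (convex combination).} Take any $(A,B)\in\mathcal I$. Since $\mathcal I$ is a compact polytope, it is the convex hull of its finitely many vertices. Hence there are weights $\lambda_j\ge0$ with $\sum_j\lambda_j=1$ such that $(A,B)=\sum_j\lambda_j(A^{(j)},B^{(j)})$. Because $M$ is affine and the weights sum to one, $M(A,B)=\sum_j\lambda_j M(A^{(j)},B^{(j)})$. Under~\eqref{eq:poly_decay_SDP} every summand is positive semidefinite, and the cone $\mathbb{S}_{\succeq0}$ is closed under nonnegative combinations. So $M(A,B)\succeq0$ for all $(A,B)\in\mathcal I$, which is~\eqref{eq:decay_schur}.

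\emph{Step 2 (back to the Lyapunov inequality).} Since $P\succ0$, the generalized Schur complement of the $(2,2)$ block gives
\[
P-\beta I-(AP+BY)P^{-1}(AP+BY)^\top\succeq0 .
\]
Substitute $Y=KP$ with $K:=YP^{-1}$, so that $AP+BY=(A+BK)P$. The product then becomes $(A+BK)PP^{-1}P(A+BK)^\top=(A+BK)P(A+BK)^\top$, and this yields~\eqref{eq:robust_Lyap_poly_result}.

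\emph{Step 3 (losslessness).} The ``if'' direction is Step 1. The ``only if'' direction is immediate, because the vertices belong to $\mathcal I$. Consequently, requiring~\eqref{eq:decay_schur} over all of $\mathcal I$ is equivalent to the finite family of LMIs~\eqref{eq:poly_decay_SDP}.

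None of these steps is a real obstacle. The only points needing care are that $\mathcal I$ equals the convex hull of finitely many vertices, which holds because it is a compact polyhedron (Lemma~\ref{lem:I_meas_compact} for the measurement-noise case; for $\mathcal I^{\mathcal W_{\mathrm P}}$ the same rank argument applies), and that the Schur complement step uses strict positivity of $P$.
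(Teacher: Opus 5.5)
Your proposal is correct and takes the same route as the paper. Because the block matrix is affine in $(A,B)$ and the positive semidefinite cone is convex, the vertex LMIs extend to all of $\mathcal I$, and the substitution $Y=KP$ then gives the Lyapunov inequality. You also spell out two steps the paper leaves implicit: the Schur-complement step (using $P\succ0$) that turns the block LMI back into the Lyapunov inequality, and the trivial ``only if'' direction of the losslessness claim.
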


\begin{proof}
For fixed $(P,Y,\beta)$, the matrix in~\eqref{eq:decay_schur} depends on $(A,B)$ only through affine terms (e.g., $AP+BY$).
Since $\mathbb S_{\succeq 0}$ is a convex cone,~\eqref{eq:decay_schur} is convex in $(A,B)$. Hence, if~\eqref{eq:poly_decay_SDP} holds at all vertices, then for any
$(A,B)=\sum_{j=1}^{N_v}\lambda_j(A^{(j)},B^{(j)})$ with $\lambda_j\ge0$ and $\sum_j\lambda_j=1$,
\[
\begin{aligned}
\begin{bmatrix}
P-\beta I & AP+BY\\
(AP+BY)^\top & P
\end{bmatrix}
=\\
\sum_{j=1}^{N_v}\lambda_j
\begin{bmatrix}
P-\beta I & A^{(j)}P + B^{(j)}Y\\
\bigl(A^{(j)}P + B^{(j)}Y\bigr)^\top & P
\end{bmatrix}
\succeq 0,
\end{aligned}
\]
which proves~\eqref{eq:decay_schur} for all $(A,B)\in\mathcal I$. Substituting $Y=KP$ yields~\eqref{eq:robust_Lyap_poly_result}.
\end{proof}

\begin{remark}\label{rem:conserv_beta}
In the polytopic case, the vertex LMIs~\eqref{eq:poly_decay_SDP} are lossless and introduce no additional
conservatism. In contrast, the lossy S-procedure in~\eqref{eq:ellip_stab_sproc} yields a tractable sufficient
certificate. This relaxation may certify a smaller contraction margin (or enforce a more restrictive feedback gain),
thereby leading to a larger computed RPI tube (outer approximation).

To strengthen the certified closed-loop contraction---which typically yields a smaller minimal RPI set for the error dynamics and facilitates the construction of a larger feasible terminal set---the feasibility SDPs in~\eqref{eq:ellip_stab_sproc} and~\eqref{eq:poly_decay_SDP} can be lifted to optimization problems by maximizing the margin $\beta$ (subject to a suitable normalization to remove the inherent homogeneity). In addition, following~\cite{ChenAllgower1998}, the terminal set is typically scaled to ensure
terminal input admissibility.
\end{remark}

\section{Robust positively invariant sets}\label{sec:RPI}
In this section we present unified data-driven procedures to construct robust positively invariant (RPI)
tube sets for the four data-consistency set descriptions
\[
\mathcal I\in\Bigl\{\mathcal I^{\mathcal W_{\mathrm P}},\ \mathcal I^{\mathcal W_{\mathrm E}},\
\mathcal I^{\mathcal V_{\mathrm P},\ast},\ \mathcal I^{\mathcal V_{\mathrm E},\ast}\Bigr\}.
\]
Throughout, let $K$ be any robust stabilizing gain returned by
Propositions~\ref{prop:K-stab-ellip} or~\ref{prop:K-stab-poly}, so that there exist $P=P^\top\succ0$ and
$\beta>0$ satisfying the common quadratic contraction
\begin{equation}\label{eq:common_quad_contr}
A_K P A_K^\top - P \preceq -\beta I,
\qquad \forall\,A_K\in\mathcal A_K,
\end{equation}
where the closed-loop uncertainty set induced by $\mathcal I$ is
\begin{equation}\label{eq:A_K_def_unified}
\mathcal A_K
:= \bigl\{\,A{+}BK \ \big|\ (A,B)\in\mathcal I\,\bigr\}.
\end{equation}
The tube error then satisfies the additive uncertainty model
\begin{equation}\label{eq:err_dyn_unified}
e^+ \in \mathcal A_K e + \mathcal D,
\end{equation}
with the residual set $\mathcal D$ chosen consistently with the data model as
\begin{equation}\label{eq:Omega_cases}
\begin{aligned}
(\mathcal I,\mathcal D)\in\Bigl\{
&(\mathcal I^{\mathcal W_{\mathrm P}},\ \mathcal W_{\mathrm P}),\
(\mathcal I^{\mathcal W_{\mathrm E}},\ \mathcal W_{\mathrm E}),\\
&(\mathcal I^{\mathcal V_{\mathrm P},\ast},\ (1+\gamma_{\mathrm P}^\ast)\mathcal V_{\mathrm P}),\\
&(\mathcal I^{\mathcal V_{\mathrm E},\ast},\ (1+\gamma_{\mathrm E}^\ast)\mathcal V_{\mathrm E})
\Bigr\}.
\end{aligned}
\end{equation}

\emph{(i) Polytopic RPI sets.}
When $\mathcal I\in\{\mathcal I^{\mathcal W_{\mathrm P}},\,\mathcal I^{\mathcal V_{\mathrm P},\ast}\}$,
both $\mathcal A_K$ and $\mathcal D$ in~\eqref{eq:err_dyn_unified} are polytopic. In particular, by linearity of
$(A,B)\mapsto A{+}BK$, the closed-loop uncertainty set $\mathcal A_K$ is a polytope in the matrix space and thus admits
a finite vertex representation
$\operatorname{vert}(\mathcal A_K)=\{A_K^{(j)}\}_{j=1}^{N_K}$.
Define the set-valued map
\begin{equation}\label{eq:Phi_def_poly_replace}
\Phi(\mathcal S)
:=
\operatorname{co}\!\bigl(\mathcal A_K\, \mathcal S\bigr)\ \oplus\ \mathcal D.
\end{equation}
Since $\mathcal A_K$ is polytopic,
\[
\operatorname{co}\!\bigl(\mathcal A_K \mathcal S\bigr)
:=
\operatorname{co}\!\Bigl(\bigcup_{j=1}^{N_K} A_K^{(j)}\mathcal S\Bigr),
\]
which is a polytope as the convex hull of finitely many polytopes. Starting from any bounded seed
$\mathcal S_0\subset\mathbb R^n$, we generate the sequence
\begin{equation}\label{eq:Phi_iter_poly_replace}
\mathcal S_{t+1}=\Phi(\mathcal S_t),\qquad t=0,1,2,\ldots,
\end{equation}
which is well-defined and remains polytopic for all $t$.

To certify a polyhedral tube under $\varepsilon$-stopping, we measure set increments with respect to a compact,
convex, centrally symmetric polytope $\Omega\subset\mathbb R^n$ with $0\in\operatorname{int}(\Omega)$.
Let $\|\cdot\|_\Omega$ denote its Minkowski functional and define the induced operator norm
$\|A\|_\Omega:=\sup_{x\neq 0}\|Ax\|_\Omega/\|x\|_\Omega$. The worst-case induced gain over the closed-loop
uncertainty set $\mathcal A_K$ is
\begin{equation}\label{eq:c_def_poly_replace}
c_\Omega:=\sup_{A_K\in\mathcal A_K}\|A_K\|_\Omega
=\max_{j=1,\ldots,N_K}\|A_K^{(j)}\|_\Omega,
\end{equation}
where the vertex maximization follows since $\mathcal A_K$ is polytopic and $A\mapsto\|A\|_\Omega$ is convex. Let $\mathbb B_{P^{-1}}:=\{e\in\mathbb R^n:\|e\|_{P^{-1}}\le1\}$, where $\|e\|_{P^{-1}}:=\sqrt{e^\top P^{-1} e}$, and define the induced gain
$\|A\|_{P^{-1}}:=\sup_{e\neq0}\|Ae\|_{P^{-1}}/\|e\|_{P^{-1}}$. From~\eqref{eq:common_quad_contr}, it holds that
\begin{equation}\label{eq:cP_def_poly_replace}
c_{P^{-1}}:=\sup_{A_K\in\mathcal A_K}\|A_K\|_{P^{-1}}
\le \sqrt{1-\beta/\lambda_{\max}(P)}<1.
\end{equation}
We select $\Omega$ as a polyhedral outer approximation of
$\mathbb B_{P^{-1}}$.

\begin{lemma}\label{lem:cOmega_bound}
Let $\Omega$ satisfy $\mathbb B_{P^{-1}}\subseteq \Omega \subseteq \kappa\,\mathbb B_{P^{-1}}$ for some $\kappa\ge 1$.
Then $\|e\|_\Omega \le \|e\|_{P^{-1}} \le \kappa\|e\|_\Omega$ for all $e$, and thus
$\|A\|_\Omega \le \kappa\|A\|_{P^{-1}}$ for all $A$. Consequently, $c_\Omega \le \kappa c_{P^{-1}}$.
\end{lemma}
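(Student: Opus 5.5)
The plan is to work directly from the definition of the Minkowski functional and use that set inclusions reverse the inequalities between gauges. For compact convex sets $\mathcal S\subseteq\mathcal T$ containing the origin in their interior, every $\alpha\ge0$ with $e\in\alpha\mathcal S$ also satisfies $e\in\alpha\mathcal T$. Hence $\|e\|_{\mathcal T}\le\|e\|_{\mathcal S}$ for all $e$. We also note that the gauge of $\mathbb B_{P^{-1}}$ is exactly $\|e\|_{P^{-1}}=\sqrt{e^\top P^{-1}e}$, since $e\in\alpha\mathbb B_{P^{-1}}$ if and only if $e^\top P^{-1}e\le\alpha^2$. Positive homogeneity gives $\|e\|_{\kappa\mathbb B_{P^{-1}}}=\kappa^{-1}\|e\|_{P^{-1}}$.

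Applying the inclusion principle to $\mathbb B_{P^{-1}}\subseteq\Omega$ yields $\|e\|_\Omega\le\|e\|_{P^{-1}}$. Applying it to $\Omega\subseteq\kappa\mathbb B_{P^{-1}}$ yields $\kappa^{-1}\|e\|_{P^{-1}}\le\|e\|_\Omega$, that is, $\|e\|_{P^{-1}}\le\kappa\|e\|_\Omega$. This gives the first chain of inequalities. Both gauges are finite and positive on $e\neq0$, because $\Omega$ is compact with the origin in its interior and $P\succ0$.

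For the operator norm, fix $A$ and $x\neq0$. Chaining the two norm comparisons gives
\[
\|Ax\|_\Omega\le\|Ax\|_{P^{-1}}\le\|A\|_{P^{-1}}\|x\|_{P^{-1}}\le\kappa\|A\|_{P^{-1}}\|x\|_\Omega .
\]
Dividing by $\|x\|_\Omega>0$ and taking the supremum over $x\neq0$ gives $\|A\|_\Omega\le\kappa\|A\|_{P^{-1}}$. Finally, we apply this bound to each $A_K\in\mathcal A_K$ and take the supremum over $\mathcal A_K$, using the definitions \eqref{eq:c_def_poly_replace} and \eqref{eq:cP_def_poly_replace}. This yields $c_\Omega\le\kappa c_{P^{-1}}$.

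No step is a genuine obstacle. The only point needing care is the direction of the gauge inequalities under set inclusion, where a larger set gives a smaller gauge, together with checking that $\|\cdot\|_{P^{-1}}$ really is the gauge of $\mathbb B_{P^{-1}}$.
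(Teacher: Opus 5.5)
Your argument is correct: gauge monotonicity under inclusion (larger set, smaller gauge), the identity $\|e\|_{\kappa\mathbb B_{P^{-1}}}=\kappa^{-1}\|e\|_{P^{-1}}$, chaining inside the operator-norm quotient, and a final supremum over $\mathcal A_K$ together give all three claims. The paper states this lemma without proof, treating it as a standard norm-equivalence fact, and your proposal is the routine argument it implicitly relies on (note that positivity of $\|x\|_\Omega$ for $x\neq 0$ already follows from your inequality $\|x\|_{P^{-1}}\le\kappa\|x\|_\Omega$, so no separate compactness appeal is needed).
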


\begin{remark}\label{rem:kappa_computable}
One may take $
\kappa
=
\max_{\omega\in\operatorname{vert}(\Omega)}\|\omega\|_{P^{-1}}
$.
In particular, $\kappa<1/c_{P^{-1}}$ implies $c_\Omega<1$.
\end{remark}

The following theorem summarizes a standard set-iteration construction of a \emph{polyhedral} certified RPI tube via
$\varepsilon$-stopping.
\begin{theorem}\label{thm:RPI_poly}
Let $\mathcal A_K$ and $\mathcal D$ be the polytopes induced by
$\mathcal I\in\{\mathcal I^{\mathcal W_{\mathrm P}},\,\mathcal I^{\mathcal V_{\mathrm P},\ast}\}$, and let $\Phi$ and
$\{\mathcal S_t\}_{t\ge0}$ be defined by~\eqref{eq:Phi_def_poly_replace}--\eqref{eq:Phi_iter_poly_replace}.
Let $\Omega$ be chosen as in Remark~\ref{rem:kappa_computable}, so that $c_\Omega<1$. Then:
\begin{enumerate}
\item \emph{Monotonicity and one-step boundedness:} if $\mathcal S\subseteq\mathcal T$, then
$\Phi(\mathcal S)\subseteq\Phi(\mathcal T)$. Moreover, if $\mathcal S$ is bounded, then so is $\Phi(\mathcal S)$.

\item \emph{Existence of an RPI limit set:} if $0\in\mathcal D$ and $\mathcal S_0\subseteq \Phi(\mathcal S_0)$, then
$\{\mathcal S_t\}_{t\ge0}$ is outer-monotone and converges (in the Hausdorff metric induced by $\|\cdot\|_\Omega$) to the compact convex set
\[
\mathcal E_\star
=\lim_{t\to\infty}\mathcal S_t
=\bigoplus_{i=0}^{\infty}\operatorname{co}\!\bigl(\mathcal A_K^{\,i}\mathcal D\bigr),
\]
where $\mathcal A_K^{\,0}:=\{I\}$ and
$\mathcal A_K^{\,i}:=\{A_i\cdots A_1 \mid A_\ell\in\mathcal A_K\}$ for $i\ge1$.
Moreover, $\mathcal E_\star$ satisfies the RPI inclusion
\begin{equation}\label{eq:RPI_inclusion_poly_replace}
\mathcal A_K\,\mathcal E_\star\ \oplus\ \mathcal D\ \subseteq\ \mathcal E_\star.
\end{equation}

\item \emph{$\varepsilon$-stopping and certified polyhedral RPI:} if, for some $t^\star$ and $\varepsilon>0$,
\begin{equation}\label{eq:stop_eps_poly_replace}
\mathcal S_{t^\star+1}\ \subseteq\ \mathcal S_{t^\star}\ \oplus\ \varepsilon\,\Omega,
\end{equation}
then the inflated set
\begin{equation}\label{eq:E_eps_poly_replace}
\mathcal E_\varepsilon
:=\mathcal S_{t^\star}\ \oplus\ \frac{\varepsilon}{1-c_\Omega}\,\Omega,
\end{equation}
is a \emph{polyhedral} convex RPI set, i.e.,
$\mathcal A_K\,\mathcal E_\varepsilon\oplus\mathcal D\subseteq\mathcal E_\varepsilon$.
\end{enumerate}
\end{theorem}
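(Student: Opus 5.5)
The proof proceeds item by item, using the gauge $\|\cdot\|_\Omega$ throughout. The key facts are that $\mathcal A_K$ is a compact polytope with $c_\Omega<1$, and that for any convex set $\mathcal S$ and every $A_K\in\mathcal A_K$ we have $A_K\mathcal S\subseteq \operatorname{co}(\mathcal A_K\mathcal S)$. We also use the identity $\operatorname{co}(\mathcal A_K\,\alpha\Omega)\subseteq c_\Omega\alpha\Omega$, which holds since $\Omega$ is convex and $\|A_K^{(j)}\omega\|_\Omega\le c_\Omega\|\omega\|_\Omega$.

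\textbf{Item 1.} Monotonicity is immediate. If $\mathcal S\subseteq\mathcal T$, then $A_K\mathcal S\subseteq A_K\mathcal T$ for each $A_K$, the union and convex hull preserve inclusion, and Minkowski addition of $\mathcal D$ is monotone. For boundedness, if $\mathcal S\subseteq r\Omega$, then $\operatorname{co}(\mathcal A_K\mathcal S)\subseteq c_\Omega r\,\Omega$. Since $\mathcal D$ is compact, $\Phi(\mathcal S)$ is bounded.

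\textbf{Item 2.} From $\mathcal S_0\subseteq\Phi(\mathcal S_0)=\mathcal S_1$ and monotonicity, induction gives $\mathcal S_t\subseteq\mathcal S_{t+1}$ for all $t$.

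Next I will show the sequence is uniformly bounded via a contraction estimate. Let $d_H$ be the Hausdorff distance induced by $\|\cdot\|_\Omega$. Then
\[
d_H(\Phi(\mathcal S),\Phi(\mathcal T))\le c_\Omega\, d_H(\mathcal S,\mathcal T).
\]
To see this, note that $\mathcal S\subseteq\mathcal T\oplus\delta\Omega$ implies $A_K^{(j)}\mathcal S\subseteq A_K^{(j)}\mathcal T\oplus c_\Omega\delta\Omega$. Taking convex hulls gives $\operatorname{co}(\mathcal A_K\mathcal S)\subseteq\operatorname{co}(\mathcal A_K\mathcal T)\oplus c_\Omega\delta\Omega$, because the right-hand side is convex and contains every $A_K^{(j)}\mathcal S$. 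Adding $\mathcal D$ to both sides preserves the inclusion.

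Hence $\Phi$ is a contraction on the complete space of nonempty compact convex sets. Banach's fixed point theorem then gives convergence of $\mathcal S_t$ to the unique fixed point $\mathcal E_\star$, independently of the seed. Outer-monotonicity also gives $\mathcal E_\star=\operatorname{cl}\bigcup_t\mathcal S_t$.

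To identify $\mathcal E_\star$ with the series, I start from $\mathcal S_0=\{0\}$ (admissible since $0\in\mathcal D$). Induction then shows
\[
\mathcal S_t=\bigoplus_{i=0}^{t-1}\operatorname{co}(\mathcal A_K^{\,i}\mathcal D).
\]
The inductive step uses the identity $\operatorname{co}(\mathcal A_K(\mathcal X\oplus\mathcal Y))=\operatorname{co}(\mathcal A_K\mathcal X)\oplus\operatorname{co}(\mathcal A_K\mathcal Y)$ for convex $\mathcal X,\mathcal Y$. Its "$\subseteq$" direction is clear. The "$\supseteq$" direction is the delicate part: in general the hull of the image of a sum is smaller than the sum of the hulls, since different vertex matrices may act on each summand. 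This is the main obstacle. I would try to resolve it by first checking whether "$\subseteq$" alone is enough, i.e.\ by showing that the series is itself a fixed point of $\Phi$ or at least an RPI upper bound. If only "$\subseteq$" survives, I would present the series as an outer set containing $\mathcal E_\star$ and note that uniqueness of the fixed point identifies the limit.

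The tail of the series is controlled by $\operatorname{co}(\mathcal A_K^{\,i}\mathcal D)\subseteq c_\Omega^{\,i}r_D\Omega$, which gives convergence of the Minkowski series. Finally, the RPI inclusion \eqref{eq:RPI_inclusion_poly_replace} follows from $\Phi(\mathcal E_\star)=\mathcal E_\star$ together with $\mathcal A_K\mathcal E_\star\subseteq\operatorname{co}(\mathcal A_K\mathcal E_\star)$.

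\textbf{Item 3.} Let $\rho=\varepsilon/(1-c_\Omega)$. By the hull estimate, linearity, and convexity,
\[
\operatorname{co}\bigl(\mathcal A_K(\mathcal S_{t^\star}\oplus\rho\Omega)\bigr)\subseteq\operatorname{co}(\mathcal A_K\mathcal S_{t^\star})\oplus c_\Omega\rho\Omega.
\]
Adding $\mathcal D$ gives
\[
\Phi(\mathcal E_\varepsilon)\subseteq\mathcal S_{t^\star+1}\oplus c_\Omega\rho\Omega\subseteq\mathcal S_{t^\star}\oplus(\varepsilon+c_\Omega\rho)\Omega=\mathcal S_{t^\star}\oplus\rho\Omega=\mathcal E_\varepsilon,
\]
where the second inclusion uses \eqref{eq:stop_eps_poly_replace} and $\varepsilon\Omega\oplus c_\Omega\rho\Omega=(\varepsilon+c_\Omega\rho)\Omega$ by convexity. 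Since $\mathcal A_K\mathcal E_\varepsilon\oplus\mathcal D\subseteq\Phi(\mathcal E_\varepsilon)$, $\mathcal E_\varepsilon$ is RPI. It is polyhedral because it is a Minkowski sum of polytopes.
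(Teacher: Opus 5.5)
Your items 1 and 3, the outer-monotonicity, the convergence, and the RPI inclusion in item 2 are correct and follow the paper's outline. Your Hausdorff estimate $d_H(\Phi(\mathcal S),\Phi(\mathcal T))\le c_\Omega\, d_H(\mathcal S,\mathcal T)$ combined with Banach's theorem sharpens the paper's ``contraction on the $\Omega$-gauge radii'': it also gives uniqueness of the fixed point and seed independence. Your identity $\varepsilon+c_\Omega\rho=\rho$ in item 3 is exactly the paper's geometric-series step.

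\textbf{The gap: the series representation, which cannot be proved because it is false.} Your fallback does not close this step. Uniqueness of the fixed point identifies the limit with $\mathcal F:=\bigoplus_{i\ge0}\operatorname{co}(\mathcal A_K^{\,i}\mathcal D)$ only if $\Phi(\mathcal F)=\mathcal F$. The valid ``$\subseteq$'' direction gives only $\Phi(\mathcal F)\subseteq\mathcal F$, and hence, by your induction from $\mathcal S_0=\{0\}$, only $\mathcal E_\star\subseteq\mathcal F$. The reverse inclusion fails in general, so the equality claimed in item 2 is false. The paper's ``repeated substitution'' silently uses exactly the identity you flagged.

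\textbf{Counterexample.} Take $\mathcal A_K=\operatorname{co}\{A^{(1)},A^{(2)}\}$ and $\mathcal D=[-1,1]\times\{0\}$, with
\[
A^{(1)}=\begin{bmatrix}0&0\\ \tfrac12&0\end{bmatrix},\qquad
A^{(2)}=\begin{bmatrix}0&\tfrac12\\ 0&0\end{bmatrix}.
\]
The hypotheses hold:
\begin{itemize}
\item $P=I$ and $\beta=\tfrac34$ certify the common contraction.
\item $\Omega=[-1,1]^2$ gives $\kappa=\sqrt2<2=1/c_{P^{-1}}$ and $c_\Omega=\tfrac12$.
\end{itemize}
A convex combination $\lambda A^{(1)}+(1-\lambda)A^{(2)}$ maps $(x,y)\mapsto(\tfrac{1-\lambda}{2}y,\tfrac{\lambda}{2}x)$. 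Hence $\operatorname{co}(\mathcal A_K^{\,i}\mathcal D)$ equals $[-2^{-i},2^{-i}]\times\{0\}$ for even $i$ and $\{0\}\times[-2^{-i},2^{-i}]$ for odd $i$. Summing, $\mathcal F=[-\tfrac43,\tfrac43]\times[-\tfrac23,\tfrac23]$.

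Now let $\mathcal H:=\operatorname{co}\{(\pm\tfrac43,0),(\pm1,\pm\tfrac23)\}=\operatorname{co}\{(\pm\tfrac13,0),(0,\pm\tfrac23)\}\oplus\mathcal D$. Then $A^{(1)}\mathcal H=\{0\}\times[-\tfrac23,\tfrac23]$ and $A^{(2)}\mathcal H=[-\tfrac13,\tfrac13]\times\{0\}$. So $\Phi(\mathcal H)=\mathcal H$, and your uniqueness argument gives $\mathcal E_\star=\mathcal H$. The same computation gives $\Phi(\mathcal F)=\mathcal H\subsetneq\mathcal F$. The corner $(\tfrac43,\tfrac23)$ lies in $\mathcal F$ but not in $\mathcal H$.

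The discrepancy already appears at $t=3$: starting from $\{0\}$, $\mathcal S_3$ misses $(\tfrac54,\tfrac12)$, which lies in the third partial sum. It also survives small full-dimensional thickenings of $\mathcal A_K$ and $\mathcal D$, since $\mathcal F$ only grows while $\mathcal E_\star$ moves continuously.

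\textbf{What can be proved.} The correct statement is $\mathcal E_\star\subseteq\bigoplus_{i\ge0}\operatorname{co}(\mathcal A_K^{\,i}\mathcal D)$. Equality holds when $\mathcal A_K$ is a singleton, since then $\operatorname{co}(A(\mathcal X\oplus\mathcal Y))=A\mathcal X\oplus A\mathcal Y$ for convex sets. An exact description comes from $\mathcal S_t=\operatorname{co}(\mathcal R_t)$, where $\mathcal R_0=\{0\}$ and $\mathcal R_{t+1}=\mathcal A_K\mathcal R_t\oplus\mathcal D$:
\[
\mathcal E_\star=\operatorname{cl}\operatorname{co}\Bigl\{\textstyle\sum_{i=0}^{t-1}A_1\cdots A_i\,d_i\Bigr\},
\]
with the union taken over all $t\ge0$, $A_\ell\in\mathcal A_K$, and $d_i\in\mathcal D$. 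Here the products share prefixes, whereas the Minkowski series lets each summand choose its own product; that is precisely the discrepancy you identified. Neither the RPI inclusion nor item 3 is affected, since both use only the valid direction.
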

\begin{proof}
1) follows since linear images, convex hull, and Minkowski addition with fixed $\mathcal D$ preserve inclusion; compactness of
$\mathcal A_K$ (as a linear image of $\mathcal I$) and of $\mathcal D$ yields one-step boundedness.
2) uses $c_\Omega<1$ and the submultiplicativity of $\|\cdot\|_\Omega$ to obtain a contraction on the $\Omega$-gauge radii, yielding
uniform boundedness and convergence to $\mathcal E_\star$; the limit representation follows from repeated substitution.
3) By definition of $c_\Omega$, $\operatorname{co}(\mathcal A_K\,\Omega)\subseteq c_\Omega\,\Omega$. Using the stopping test
$\mathcal S_{t^\star+1}=\operatorname{co}(\mathcal A_K\mathcal S_{t^\star})\oplus\mathcal D\subseteq \mathcal S_{t^\star}\oplus \varepsilon\Omega$
and $\operatorname{co}(\mathcal A_K\,\Omega)\subseteq c_\Omega\Omega$, one obtains
$\operatorname{co}(\mathcal A_K\,\mathcal E_\varepsilon)\oplus\mathcal D\subseteq \mathcal E_\varepsilon$ with the inflation factor
$\varepsilon/(1-c_\Omega)$ by a standard geometric-series argument.
\end{proof}

Since $\mathcal A_K$ and $\mathcal D$ are polytopic, every iterate $\mathcal S_t$ generated by~\eqref{eq:Phi_iter_poly_replace}
is a polytope. Moreover, $\Omega$ is a polytope by construction, and thus whenever the $\varepsilon$-stopping
criterion~\eqref{eq:stop_eps_poly_replace} is met with a polyhedral seed $\mathcal S_0$, the certified set
$\mathcal E_\varepsilon$ in~\eqref{eq:E_eps_poly_replace} is a polyhedral RPI tube. Such sets can be computed by standard
polyhedral set-iteration algorithms; cf.~\cite{kouramas2005minimal}.

\emph{(ii) Ellipsoidal RPI sets.}
When $\mathcal I\in\{\mathcal I^{\mathcal W_{\mathrm E}},\,\mathcal I^{\mathcal V_{\mathrm E},\ast}\}$, the residual set
$\mathcal D$ in~\eqref{eq:Omega_cases} is ellipsoidal. In this case, it is natural to adopt an ellipsoidal tube of the
form $r\,\mathbb B_{P^{-1}}$, with $\mathbb B_{P^{-1}}$ induced by the same quadratic contraction certificate~\eqref{eq:common_quad_contr}.
Such ellipsoidal tubes (and their radii $r$) can be obtained directly from the contraction margin; see, e.g.,~\cite{Blanchini1999,BoydElGhaouiFeronBalakrishnan1994}.

The following theorem gives an explicit radius condition ensuring ellipsoidal robust positive invariance.
\begin{theorem}\label{thm:RPI_ellip}
Let $K$ satisfy~\eqref{eq:common_quad_contr} for the $\mathcal A_K$ induced by
$\mathcal I\in\{\mathcal I^{\mathcal W_{\mathrm E}},\,\mathcal I^{\mathcal V_{\mathrm E},\ast}\}$, and let $\mathcal D$ be as in~\eqref{eq:Omega_cases}. Define
\begin{equation}\label{eq:c_def_ellip}
c := \sup_{A_K\in\mathcal A_K}\|A_K\|_{P^{-1}}
\ \le\ \sqrt{1-\beta/\lambda_{\max}(P)}\ < 1,
\end{equation}
and $\bar d_{P^{-1}}:=\sup_{d\in\mathcal D}\|d\|_{P^{-1}}$. Then any radius $r$ satisfying
\begin{equation}\label{eq:r_choice_ellip}
r \ \ge\ \frac{\bar d_{P^{-1}}}{1-c},
\end{equation}
defines a convex RPI ellipsoid
\begin{equation}\label{eq:E_ellip}
\mathcal E_{\mathrm E}(r)
:= r\,\mathbb B_{P^{-1}}
=\{\,e\in\mathbb R^n \mid e^\top P^{-1} e \le r^2\,\},
\end{equation}
i.e., $\mathcal A_K\,\mathcal E_{\mathrm E}(r)\oplus\mathcal D\subseteq\mathcal E_{\mathrm E}(r)$.
\end{theorem}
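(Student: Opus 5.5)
The plan is to argue directly in the $P^{-1}$-weighted norm, using that it is a genuine norm whose unit ball is $\mathbb B_{P^{-1}}$. In this norm, $\mathcal E_{\mathrm E}(r)=\{e\mid \|e\|_{P^{-1}}\le r\}$. Take an arbitrary $e\in\mathcal E_{\mathrm E}(r)$, an arbitrary $A_K\in\mathcal A_K$ and an arbitrary $d\in\mathcal D$. It then suffices to show $\|A_Ke+d\|_{P^{-1}}\le r$.

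By the triangle inequality and the definition of the induced gain,
\[
\|A_Ke+d\|_{P^{-1}}\le \|A_K\|_{P^{-1}}\|e\|_{P^{-1}}+\|d\|_{P^{-1}}\le c\,r+\bar d_{P^{-1}} .
\]
The radius condition~\eqref{eq:r_choice_ellip} is equivalent to $\bar d_{P^{-1}}\le (1-c)r$, since $1-c>0$. Hence the right-hand side is at most $r$, which proves the inclusion $\mathcal A_K\,\mathcal E_{\mathrm E}(r)\oplus\mathcal D\subseteq\mathcal E_{\mathrm E}(r)$. Convexity of $\mathcal E_{\mathrm E}(r)$ is immediate because it is a sublevel set of a norm. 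Finiteness of $\bar d_{P^{-1}}$ follows from compactness of the ellipsoidal residual sets in~\eqref{eq:Omega_cases}.

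The step that needs care is the bound $c\le\sqrt{1-\beta/\lambda_{\max}(P)}<1$ claimed in~\eqref{eq:c_def_ellip}. The subtlety is that~\eqref{eq:common_quad_contr} is written in the dual form $A_KPA_K^\top\preceq P-\beta I$, rather than the primal form $A_K^\top P^{-1}A_K\preceq\cdot$. I would argue as follows.

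First, $\beta I\succeq (\beta/\lambda_{\max}(P))P$ gives $A_KPA_K^\top\preceq \rho^2P$ with $\rho^2:=1-\beta/\lambda_{\max}(P)$.

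Next, congruence by $P^{-1/2}$ gives $MM^\top\preceq\rho^2 I$ for $M:=P^{-1/2}A_KP^{1/2}$. Because $\|M\|_2=\|M^\top\|_2$, this is equivalent to $M^\top M\preceq\rho^2I$. Undoing the congruence yields $A_K^\top P^{-1}A_K\preceq\rho^2P^{-1}$, i.e.\ $\|A_K\|_{P^{-1}}\le\rho$.

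Taking the supremum over $\mathcal A_K$ gives the bound on $c$. Strict positivity of $\beta$, together with $\rho^2\ge0$ being implied by $A_KPA_K^\top\succeq0$, gives $\rho<1$.

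This transpose/congruence step is the only nontrivial point; everything else is the one-line triangle-inequality argument.
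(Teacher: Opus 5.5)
Your proof is correct and follows the same route as the paper. You fix $A_K\in\mathcal A_K$, $e\in\mathcal E_{\mathrm E}(r)$ and $d\in\mathcal D$, then apply the triangle inequality in $\|\cdot\|_{P^{-1}}$ to get $\|A_Ke+d\|_{P^{-1}}\le c\,r+\bar d_{P^{-1}}\le r$. Your congruence/transpose argument ($A_KPA_K^\top\preceq\rho^2P \iff A_K^\top P^{-1}A_K\preceq\rho^2P^{-1}$) is also correct, and it justifies the bound $c\le\sqrt{1-\beta/\lambda_{\max}(P)}<1$, which the paper only asserts.
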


\begin{proof}
Fix any $A_K\in\mathcal A_K$, $e\in\mathcal E_{\mathrm E}(r)$, and $d\in\mathcal D$. Then
\[
\begin{aligned}
\|e^+\|_{P^{-1}}
&=\|A_K e+d\|_{P^{-1}}
\le \|A_K\|_{P^{-1}}\,\|e\|_{P^{-1}}+\|d\|_{P^{-1}}\\
&\le c\,r+\bar d_{P^{-1}}
\le r,
\end{aligned}
\]
where the last inequality uses~\eqref{eq:r_choice_ellip}. Hence $e^+\in\mathcal E_{\mathrm E}(r)$.
\end{proof}

Moreover, for the ellipsoids in~\eqref{eq:Omega_cases} one can compute $\bar d_{P^{-1}}$ in closed form. If
\[
\mathcal D=\{\,d\in\mathbb R^n\mid d^\top Q_\mathcal D^{-1}d \le 1\,\},
\qquad Q_\mathcal D\in\mathbb S_{\succ0}^n,
\]
then
\[
\bar d_{P^{-1}}^2
=\sup_{d^\top Q_\mathcal D^{-1}d\le1}d^\top P^{-1}d
=\lambda_{\max}\!\bigl(Q_\mathcal D^{1/2} P^{-1} Q_\mathcal D^{1/2}\bigr).
\]
In particular, for $\mathcal I^{\mathcal V_{\mathrm E},\ast}$ we have $Q_\mathcal D=(1+\gamma_{\mathrm E}^\ast)^2Q_v$, and for
$\mathcal I^{\mathcal W_{\mathrm E}}$ we have $Q_\mathcal D=Q_w$. Consequently, Theorem~\ref{thm:RPI_ellip} provides a certified ellipsoidal tube that can be used to define the terminal and initial tube sets required for recursive feasibility.

\begin{remark}\label{rem:poly_vs_ellip}
The limit set $\mathcal E_\star$ in Theorem~\ref{thm:RPI_poly} coincides with the 
mRPI set of the linear difference inclusion~\cite{kouramas2005minimal} associated with~\eqref{eq:err_dyn_unified}, i.e., the smallest RPI set
in the sense of set inclusion. Under the same noise scaling in~\eqref{eq:Omega_cases},
the polyhedral tubes produced by the iteration~\eqref{eq:Phi_def_poly_replace} are typically less conservative than the
ellipsoidal tube $r\mathbb B_{P^{-1}}$ from Theorem~\ref{thm:RPI_ellip}, and hence can yield smaller tube sets
and potentially larger feasible terminal sets.

However, this improved tightness may come at a computational cost. In particular, both the convex-hull operation in
$\Phi(\cdot)$ and the vertex-based LMI conditions scale with the state dimension and the number of
vertices, which can grow rapidly for high-dimensional problems. In contrast, the ellipsoidal construction in
Theorem~\ref{thm:RPI_ellip} avoids set iteration and vertex enumeration and is therefore often easier to obtain
in large-scale settings.
\end{remark}
\begin{remark}\label{rem:gamma-star}
Relative to the ideal case where $(A^\ast,B^\ast)$ are known, the conservatism of the proposed RPI construction is
introduced only by: (i) the use of the uniform bounds $\gamma_{\mathrm P}^\ast\ge \|A^\ast\|_{\mathcal V_{\mathrm P}}$
and $\gamma_{\mathrm E}^\ast\ge \|A^\ast\|_{\mathcal V_{\mathrm E}}$, and (ii) the replacement of the true dynamics
$(A^\ast,B^\ast)$ by the corresponding data-consistency set.

As the trajectory length $T$ increases and the input becomes more persistently exciting, 
the finite constraints defining the data-consistency sets tighten and their intersection typically contracts. 
In parallel, the data-driven certification procedure in Appendix~\ref{app:gamma-tight} yields less conservative 
bounds $\gamma_{\mathrm P}^\ast$ and $\gamma_{\mathrm E}^\ast$, since the contraction of the data-consistency sets 
reduces the worst-case induced gain evaluated over these sets. Consequently, for sufficiently informative data and moderate noise,
the conservatism induced by using $\gamma_{\mathrm P}^\ast,\gamma_{\mathrm E}^\ast$ and the replacement of
$(A^\ast,B^\ast)$ with data-consistency sets is numerically observed to be mild; see Section~\ref{sec:examples}.
\end{remark}
\section{Example}\label{sec:examples}
In this section, we consider the flight-vehicle example sampled every $1\,\mathrm{s}$, as in~\cite{Mayne2005},
under the measurement-noise model. The system is
\[
x_{k+1}
=
\begin{bmatrix}
1 & 1\\
0 & 1
\end{bmatrix}x_k
+
\begin{bmatrix}
0.5\\
1
\end{bmatrix}u_k,\qquad
\hat x_k
=
x_k+v_k,
\]
where the measurement noise satisfies either the polytopic bound
\[
\mathcal V_{\mathrm P}(\bar v)
:=
\{\,v\in\mathbb R^2 \mid \|v\|_\infty\le \bar v\,\},
\]
or the ellipsoidal bound
\[
\mathcal V_{\mathrm E}(\bar v)
:=
\{\,v\in\mathbb R^2 \mid v^\top Q_v^{-1}v\le 1\,\},
\qquad Q_v:=\bar v^2 I.
\]
We sweep both the data length and the noise level. Specifically, we choose $T$ on a logarithmic grid over $[5,1000]$ and select $\bar v$ from a fixed discrete set. For each pair $(T,\bar v)$, we collect one
offline trajectory $\bigl(u^d_{[0{:}T{-}1]},\,\hat x^d_{[0{:}T{-}1]}\bigr)$ by drawing the input i.i.d.\ from
$u_k^d\in[-3,3]$. The inflation parameters $\gamma_{\mathrm P}^\ast$ and $\gamma_{\mathrm E}^\ast$ are computed via the deterministic
data-consistent certification procedures in Appendix~\ref{app:gamma-tight}. We then compute robust stabilizing gains
by solving the SDPs in Proposition~\ref{prop:K-stab-ellip} and~\ref{prop:K-stab-poly}.

Finally, for each $(T,\bar v)$ we construct the corresponding polyhedral and ellipsoidal RPI tube sets and
report the relative tube-size gap with respect to the model-based counterparts
computed using the true pair $(A^\ast,B^\ast)$. Fig.~\ref{fig:tube_gap} summarizes the resulting percentage gaps.
Overall, the gap decreases as $T$ increases and $\bar v$ decreases, reflecting that more informative data and a smaller noise level yield
tighter data-consistency sets and smaller data-consistent induced-gain bounds. In a representative low-noise regime (e.g., with a sufficiently long data sequence $T=1000$ and $\bar v=0.01$), the gap is
about $0.9\%$ for the polyhedral tube and about $2.8\%$ for the ellipsoidal tube, indicating that the induced
conservatism is numerically mild.

\begin{figure}[t]
  \centering
  \subfloat[Polyhedral tube.\label{fig:tube_gap_poly}]{\includegraphics[width=0.70\linewidth]{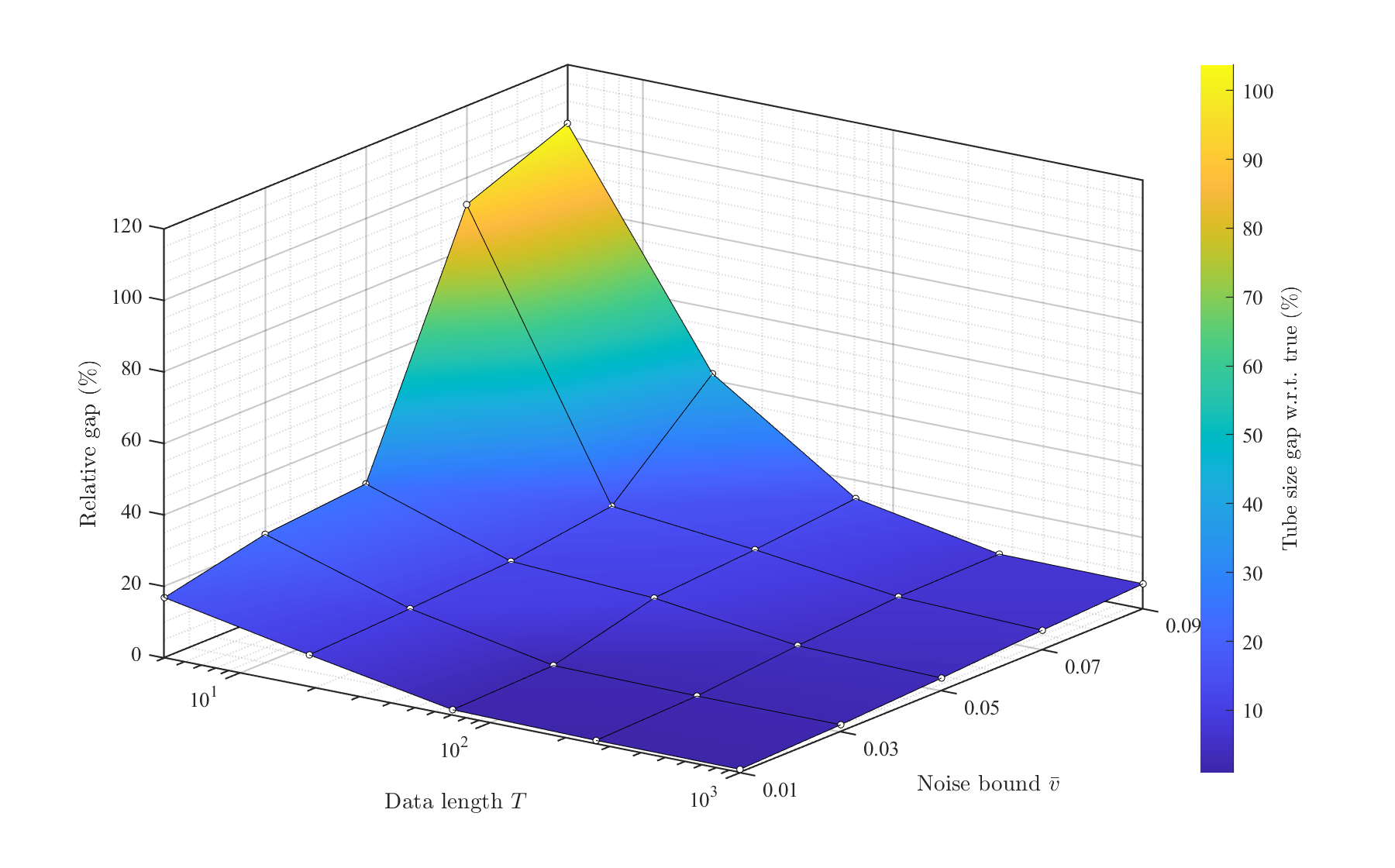}}\hfill
  \subfloat[Ellipsoidal tube.\label{fig:tube_gap_ellip}]{\includegraphics[width=0.70\linewidth]{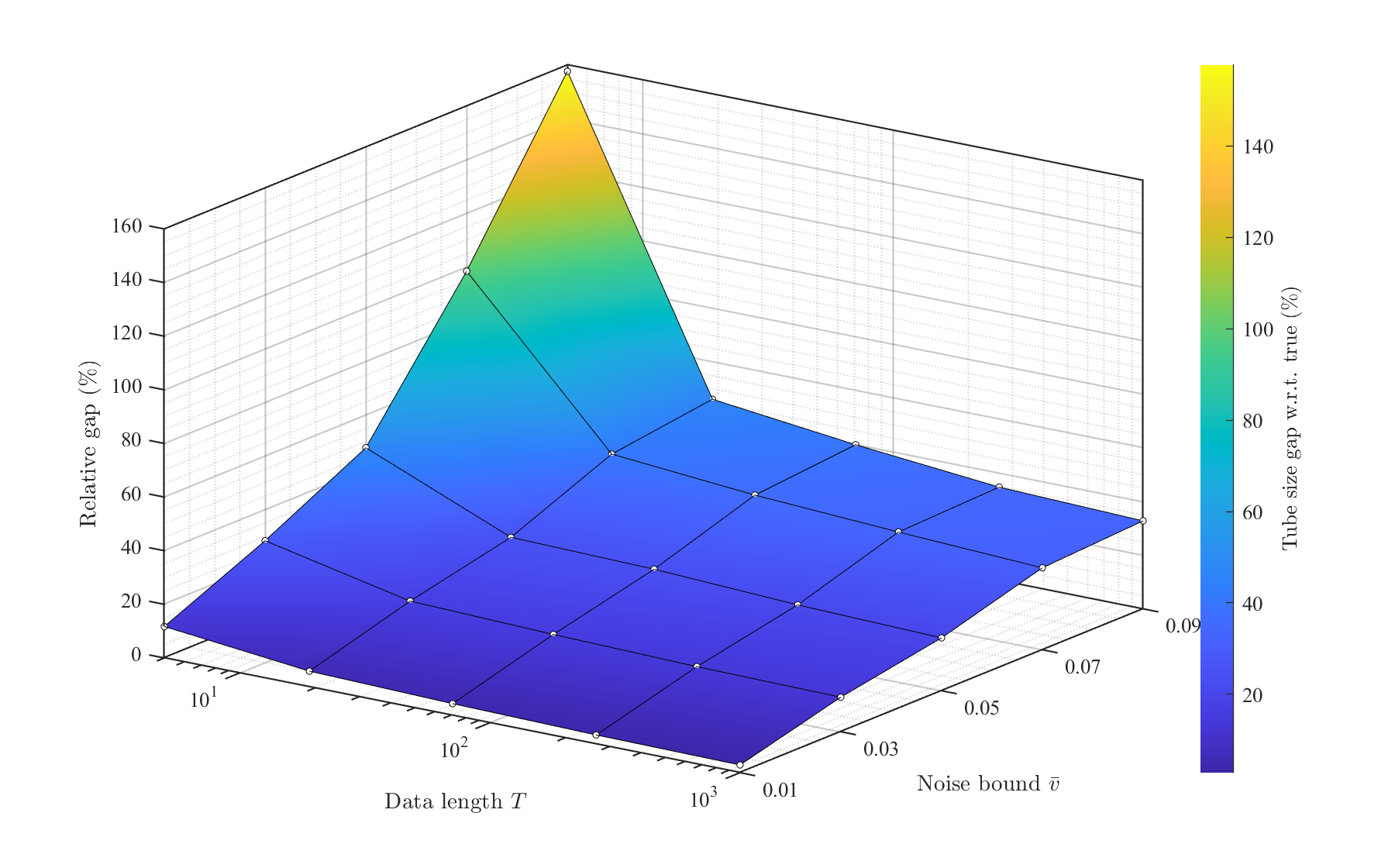}}
  \caption{Relative tube-size gap (percentage) w.r.t.\ the model-based tube computed using $(A^\ast,B^\ast)$,
  over a grid of $(T,\bar v)$.}\label{fig:tube_gap}
\end{figure}

To highlight the conservatism introduced by the lossy matrix S-procedure in the synthesis step, we consider the
scalar measurement-noise model
\[
x_{k+1}=1.1\,x_k+0.6\,u_k,\qquad \hat x_k=x_k+v_k.
\]
In one dimension, the polytopic and ellipsoidal noise descriptions coincide; consequently, for a common
choice of $\gamma^\ast$ the induced data-consistency sets also coincide. Therefore, any observed gap between the gains and contraction margins certified by
Proposition~\ref{prop:K-stab-ellip} and Proposition~\ref{prop:K-stab-poly} is attributable solely to the
conservatism of the underlying synthesis certificate.

We fix one offline trajectory of length $T=100$ and sweep the noise bound~$\bar v$. For each $\bar v$, we
synthesize a robust feedback gain by maximizing the certified contraction margin $\beta$
under the same normalization (here $P=1$). Let
\[
c:=\sup_{A_K\in\mathcal A_K}|A_K|,
\qquad
d:=(1+\gamma^\ast)\bar v,
\qquad
s_\infty:=\frac{d}{1-c},
\]
so that the mRPI interval has length $2s_\infty$. As shown in Fig.~\ref{fig:1d_sproc_vs_vtx}, the
lossy S-procedure certifies a smaller margin and yields a larger mRPI tube, and the discrepancy increases
with~$\bar v$.

\begin{figure}[t]
  \centering
  \subfloat[Certified margin $\beta$.\label{fig:1d_beta}]{\includegraphics[width=0.50\linewidth]{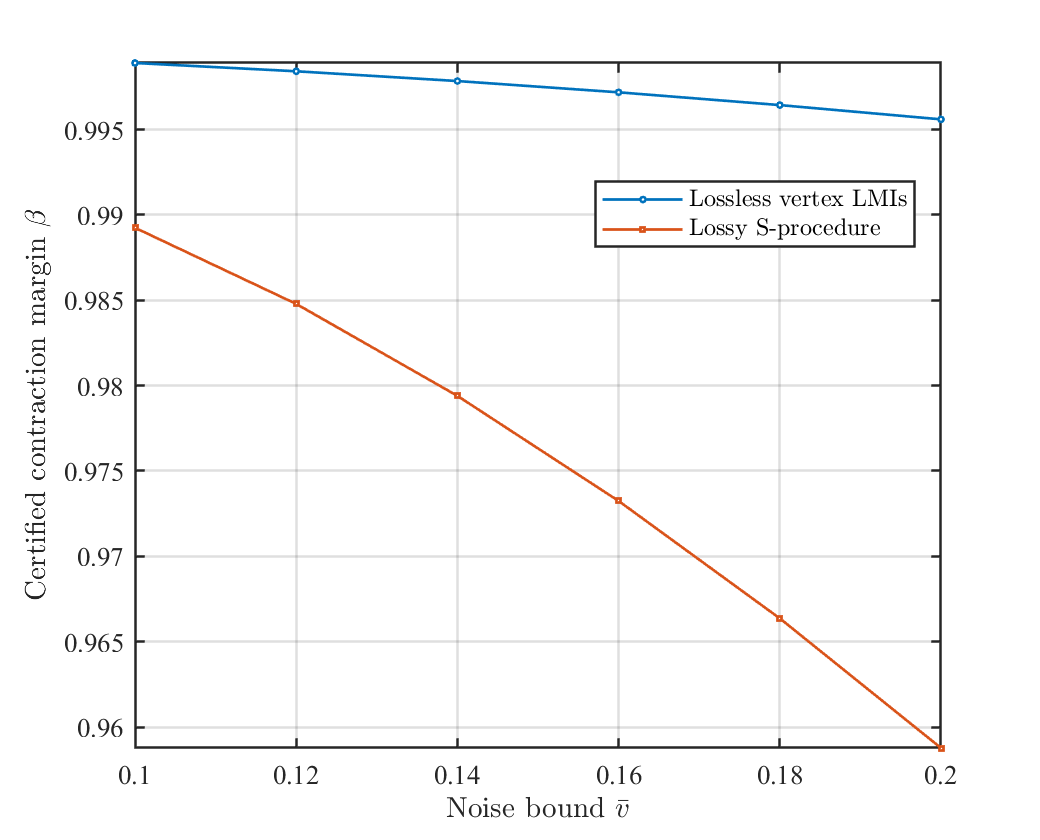}}\hfill
  \subfloat[mRPI tube size $2s_\infty$.\label{fig:1d_rpi}]{\includegraphics[width=0.50\linewidth]{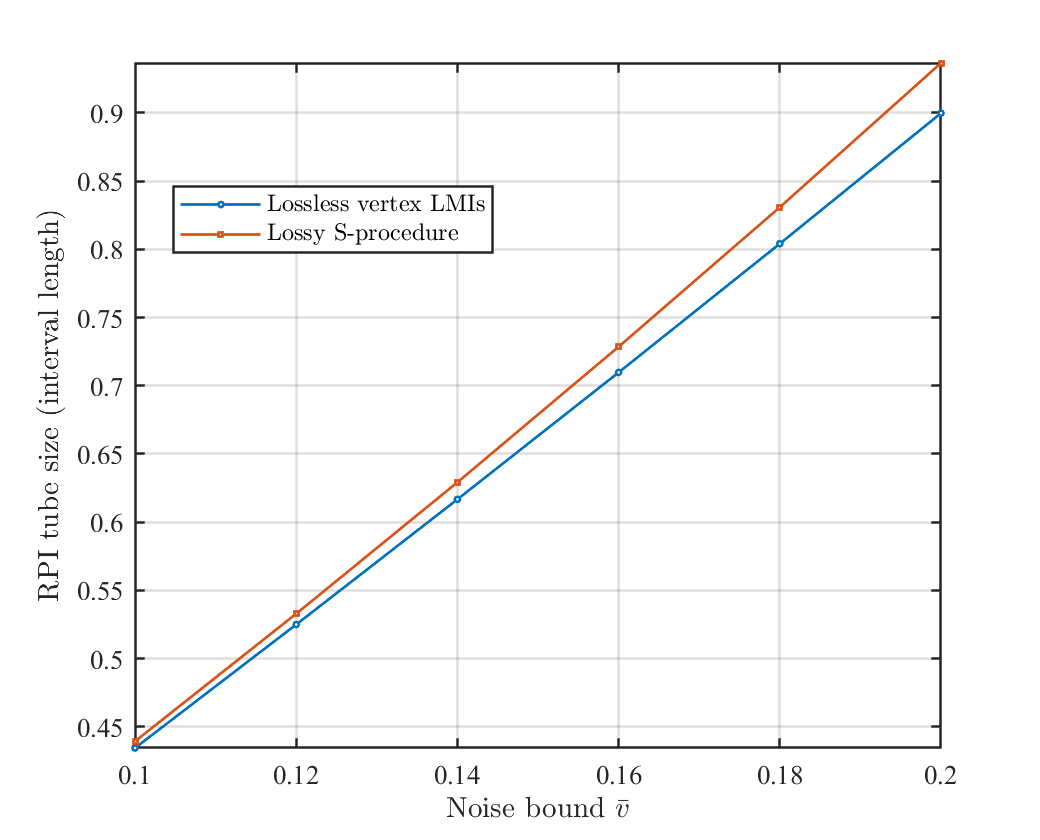}}
  \caption{Lossless vertex LMIs vs.\ lossy S-procedure under the same
  data.}\label{fig:1d_sproc_vs_vtx}
\end{figure}

\section{Conclusion}\label{sec:conclusions}
This paper provided a constructive, offline-computable approach to obtain certified RPI tubes from finite noisy input--state data of an unknown LTI system, directly supporting tube-based robust data-driven predictive control. The resulting RPI tubes are characterized through data-consistency uncertainty descriptions for process/measurement noise and a deterministic, data-checkable self-consistent induced gain certification, which removes the measurement-noise-induced residual dependence on the unknown dynamics. Leveraging these ingredients, we certify a robustly stabilizing state-feedback gain and compute polyhedral or ellipsoidal RPI tubes. Numerical results quantify the induced conservatism and show that it decreases as the data become more informative.

Several topics for future research remain open. The extension of the proposed data-driven RPI synthesis to tube-based robust data-driven predictive control with closed-loop guarantees has recently been established in related work~\cite{WangAngeli2026TRDDPC}. Another challenging direction is to extend the framework to nonlinear systems and to incorporate nonlinear feedback laws to further reduce conservatism.

\appendices
\section{Data-consistent induced gain bounds}\label{app:gamma-tight}

\subsection{Data-driven certification of $\gamma_{\mathrm P}^\ast$}\label{app:gamma-tight-VP}
This appendix provides a data-consistent procedure to compute $\gamma_{\mathrm P}^\ast$ used in~\eqref{eq:gamma_star_req}.
Define the worst-case induced gain over the data-consistency set $\mathcal I^{\mathcal V_{\mathrm P}}(\gamma)$ as
\begin{equation}\label{eq:fP_def_app_tight}
f_{\mathrm P}(\gamma)
:=
\max_{(A,B)\in\mathcal I^{\mathcal V_{\mathrm P}}(\gamma)} \|A\|_{\mathcal V_{\mathrm P}}.
\end{equation}
Since $\mathcal D_{\mathrm P}(\gamma)$ enlarges monotonically with $\gamma$, the map $f_{\mathrm P}(\gamma)$ is nondecreasing.
We then define the smallest self-consistent bound
\begin{equation}\label{eq:gammaP_star_tight_app}
\gamma_{\mathrm P}^\ast
:=
\inf\bigl\{\gamma\ge 0 \,\big|\, f_{\mathrm P}(\gamma)\le \gamma\bigr\},
\end{equation}
which satisfies the fixed-point relation $f_{\mathrm P}(\gamma_{\mathrm P}^\ast)=\gamma_{\mathrm P}^\ast$.
In particular, $\gamma_{\mathrm P}^\ast$ attains the worst-case induced gain over
$\mathcal I^{\mathcal V_{\mathrm P}}(\gamma_{\mathrm P}^\ast)$ and thus introduces no additional conservatism beyond this set.

Finally, since $\mathcal V_{\mathrm P}$ is polytopic, one has
\[
\|A\|_{\mathcal V_{\mathrm P}}
=
\max_{\ell}\ \max_{v\in\operatorname{vert}(\mathcal V_{\mathrm P})}
\frac{g_{v,\ell}^\top A v}{(h_v)_\ell},
\]
so for any fixed $\gamma$, $f_{\mathrm P}(\gamma)$ reduces to finitely many LP evaluations.
Consequently, $\gamma_{\mathrm P}^\ast$ can be computed efficiently by a one-dimensional search (e.g., bisection) applied
to~\eqref{eq:gammaP_star_tight_app}.

\subsection{Data-driven certification of $\gamma_{\mathrm E}^\ast$}\label{app:gamma-tight-VE}
We next consider the ellipsoidal case.
Analogously, one may define
\[
\begin{aligned}
f_{\mathrm E}(\gamma)
&:=
\max_{(A,B)\in\mathcal I^{\mathcal V_{\mathrm E}}(\gamma)} \|A\|_{\mathcal V_{\mathrm E}},
\\
\gamma_{\mathrm E}^\ast
&:=
\inf\bigl\{\gamma\ge 0 \,\big|\, f_{\mathrm E}(\gamma)\le \gamma\bigr\}.
\end{aligned}
\]
However, evaluating $f_{\mathrm E}(\gamma)$ entails maximizing the spectral norm
$\|Q_v^{-1/2}A Q_v^{1/2}\|_2$ over the (convex) data-consistency set
$\mathcal I^{\mathcal V_{\mathrm E}}(\gamma)$, which is a convex maximization problem and is
computationally intractable in general. We therefore resort to a tractable but conservative data-consistent certificate based on a polyhedral
outer approximation.

Specifically, select a compact, centrally symmetric polytope $\Omega_{\mathrm E}$ such that
\begin{equation}\label{eq:OmegaE_sandwich_app_tight}
\mathcal V_{\mathrm E}\ \subseteq\ \Omega_{\mathrm E}\ \subseteq\ \kappa_{\mathrm E}\,\mathcal V_{\mathrm E}
\qquad \text{for some }\ \kappa_{\mathrm E}\ge 1,
\end{equation}
which implies the norm equivalence $\|x\|_{\Omega_{\mathrm E}}\le \|x\|_{\mathcal V_{\mathrm E}}\le \kappa_{\mathrm E}\|x\|_{\Omega_{\mathrm E}}$
and hence
\begin{equation}\label{eq:OmegaE_op_equiv_app_tight}
\|A\|_{\mathcal V_{\mathrm E}}
\le
\kappa_{\mathrm E}\,\|A\|_{\Omega_{\mathrm E}},
\qquad \forall\,A\in\mathbb R^{n\times n}.
\end{equation}
Defining $\mathcal I^{\Omega_{\mathrm E}}(\gamma)$ with
$\mathcal D=(1+\gamma)\Omega_{\mathrm E}$, we compute
\begin{equation}\label{eq:gammaOmegaE_star_app_tight}
\gamma_{\Omega_{\mathrm E}}^\ast
:=
\inf\Bigl\{\gamma\ge 0 \,\Big|\,
\max_{(A,B)\in\mathcal I^{\Omega_{\mathrm E}}(\gamma)}\|A\|_{\Omega_{\mathrm E}}\le \gamma\Bigr\},
\end{equation}
by the same polyhedral steps as in Appendix~\ref{app:gamma-tight-VP}, and set
\begin{equation}\label{eq:gammaE_star_app_tight}
\gamma_{\mathrm E}^\ast
:=
\kappa_{\mathrm E}\,\gamma_{\Omega_{\mathrm E}}^\ast.
\end{equation}
The resulting $\gamma_{\mathrm E}^\ast$ provides a deterministic data-consistent upper bound on
$\|A^\ast\|_{\mathcal V_{\mathrm E}}$, but it is generally conservative due to the outer approximation
in~\eqref{eq:OmegaE_sandwich_app_tight}.

\end{document}